\newcommand{\bbN}{\mathbb{N}}
\newcommand{\bbR}{\mathbb{R}}
\newcommand{\calA}{\mathcal{A}}
\newcommand{\calB}{\mathcal{B}}
\newcommand{\calC}{\mathcal{C}}
\newcommand{\calF}{\mathcal{F}}
\newcommand{\calG}{\mathcal{G}}
\newcommand{\calH}{\mathcal{H}}
\newcommand{\calI}{\mathcal{I}}
\newcommand{\calK}{\mathcal{K}}
\newcommand{\calQ}{\mathcal{Q}}
\newcommand{\calS}{\mathcal{S}}
\newcommand{\calT}{\mathcal{T}}
\newcommand{\calU}{\mathcal{U}}
\newcommand{\calX}{\mathcal{X}}
\theoremstyle{definition}
\newtheorem{assumption}{Assumption}
\newtheorem{theorem}{Theorem}
\newtheorem{lemma}[theorem]{Lemma}
\newtheorem{proposition}[theorem]{Proposition}
\newtheorem{corollary}[theorem]{Corollary} 
\newtheorem{definition}{Definition}
\newtheorem{example}{Example}
\theoremstyle{remark}
\newtheorem{remark}{Remark}
\title{\LARGE \bf
Handling Disjunctions in Signal Temporal Logic Based Control Through Nonsmooth Barrier Functions
}
\author{Adrian Wiltz and Dimos V. Dimarogonas
\thanks{This work was supported by the ERC Consolidator Grant LEAFHOUND, the Swedish Foundation for Strategic Research (SSF) COIN, the Swedish Research Council (VR) and the Knut och Alice Wallenberg Foundation.}
\thanks{The authors are with the Division
	of Decision and Control Systems, KTH Royal Institute of Technology, SE-100 44 Stockholm, Sweden \{wiltz, dimos\}@kth.se.}%
\thanks{This work has been accepted for presentation at the 61st IEEE Conference on Decision and Control 2022, in Cancún, Mexico. Proofs have been subject to the review process but are not included in the version presented at the conference due to space limitations.}
}
\begin{document}

\maketitle
\thispagestyle{empty}
\pagestyle{empty}

\setlength{\abovedisplayskip}{0.1cm}
\setlength{\belowdisplayskip}{0.1cm}

\begin{abstract}

For a class of spatio-temporal tasks defined by a fragment of Signal Temporal Logic (STL), we construct a nonsmooth time-varying control barrier function (CBF) and develop a controller based on a set of simple optimization problems. Each of the optimization problems invokes constraints that allow to exploit the piece-wise smoothness of the CBF for optimization additionally to the common gradient constraint in the context of CBFs. In this way, the conservativeness of the control approach is reduced in those points where the CBF is nonsmooth. Thereby, nonsmooth CBFs become applicable to time-varying control tasks. Moreover, we overcome the problem of vanishing gradients for the considered class of constraints which allows us to consider more complex tasks including disjunctions compared to approaches based on smooth CBFs. As a well-established and systematic method to encode spatio-temporal constraints, we define the class of tasks under consideration as an STL-fragment. The results are demonstrated in a relevant simulation example.

\end{abstract}

\section{Introduction}

In applications, one often encounters spatio-temporal constraints which impose both state- and time-constraints on a system. Logic expressions can be used to express such constraints. For example, one can form out of elementary rules \emph{Robot 1 must move within 5 seconds to region $ A $} (R1), or \emph{Robot 2 must move within 5 seconds to region $ B $} (R2), and \emph{Robot 1 and Robot 2 must keep a distance of at most $ d $ to each other} (R3) the overall rule $ (\text{R1} \vee \text{R2}) \wedge \text{R3} $ where $ \wedge, \vee $ denote logic AND and OR, respectively. Temporal logics like
STL (Signal Temporal Logic) \cite{Maler2004} allow the specification of such spatio-temporal constraints and increase the expressiveness of boolean logic by the temporal aspect. In the sequel, we call a composition of various spatio-temporal constraints by logic operators a \emph{task}. Although STL originates from the field of formal verification in computer science, it is becoming increasingly popular as a well-established and systematic method to formulate spatio-temporal tasks in the field of control. Therefore, we also define the class of tasks under consideration as an STL-fragment in this paper.  Most available control approaches for spatio-temporal tasks as  \cite{
Belta2007, Fainekos2005, Loizou2004} are based upon automata theory, which is often computationally expensive due to state discretization. Thereby, potential field based methods can be a computationally efficient alternative for some classes of spatio-temporal constraints~\cite{Lindemann2019,Lindemann2020a}.

Potential field based methods have a long tradition in control theory and have been successfully applied to tasks as collision- \cite{Dimarogonas2006} and obstacle-avoidance \cite{Panagou2013} as well as spatio-temporal tasks~\cite{Lindemann2019,Lindemann2020a}. In the latter case, smooth time-varying Control Barrier Functions (CBF) are employed. CBFs, introduced in~\cite{Prajna2004} and~\cite{Wieland2007}, are a control concept for ensuring the invariance of sets and proved to be a suitable tool for guaranteeing the satisfaction of state constraints on control problems~\cite{Ames2019}. 
By now, a broad range of results on CBFs can be found in the literature. Whereas first approaches on CBFs \cite{Wieland2007,Ames2014} consider systems with relative degree one, \cite{Nguyen2016,Xiao2021b} also consider systems with higher relative degree. 

With view to constraints specified via logic expressions in the context of CBFs, especially two approaches must be named: For state-constraints specified via boolean logic, \cite{Glotfelter2020}~employs nonsmooth CBFs. On the other hand, \cite{Lindemann2019} constructs a smooth time-varying CBF that ensures the satisfaction of specified spatio-temporal tasks defined via an STL-fragment. However, since \cite{Lindemann2019} uses a smoothed approximation of  maximum and minimum operators, there exist points where the gradient of the CBF vanishes. This may be problematic when considering disjunctions (logic OR) in the context of time-varying CBFs.

In this paper, we resolve this problem by using a non-smooth CBF approach and can therefore take also disjunctions into account when considering spatio-temporal tasks. 
In contrast to~\cite{Glotfelter2020}, it is too conservative to require that a control action results in an ascend on multiple ``active'' CBFs at the same time. 
Therefore, although inspired by~\cite{Filippov1964,Filippov1988}, we do not base our control approach on the Filippov-operator and differential inclusions as \cite{Glotfelter2020}, and employ a somewhat different approach in those points where the CBF is nonsmooth. In fact, we can circumvent the usage of differential inclusions by basing our controller on a set of  optimization problems that exploit the piecewise smoothness of the CBF and we can show that the solutions to the closed-loop system are Carathéodory solutions. Thereby, we make the nonsmooth approach less conservative and thus applicable to time-varying CBFs.

The sequel is structured as follows: Section~\ref{sec:preliminaries} introduces the considered dynamics, nonsmooth time-varying CBFs, and reviews STL; Section~\ref{sec:main results} constructs a nonsmooth CBF candidate for the STL-fragment under consideration, presents the control approach and proves set invariance; Section~\ref{sec:simulations} presents a relevant simulation example and demonstrates applicability of the proposed control scheme; Section~\ref{sec:conclusion} summarizes the conclusions of this paper. All proofs to the derived theoretic results can be found in the appendix.



\paragraph*{Notation} Sets are denoted by calligraphic letters. Let $ \calA\subseteq\bbR^n,\calB\subseteq\bbR^{m} $, and let $ d(\cdot,\cdot) $ define a metric on $ \calA $. The $ \varepsilon $-neighborhood of $ x\in\calA $ is $ B_{\varepsilon}(x) := \lbrace y\in \calA \,|\, d(x,y) < \varepsilon \rbrace $, $ \text{Int}\,\calA $ the interior, $ \partial\calA $ the boundary of $ \calA $; the Lebesgue measure of $ \calA'\subseteq \calA $ is $ \mu(\calA') $, and if a property of a function $ f:\calA\rightarrow\calB $ holds everywhere on $ \calA\smallsetminus\calA' $ with $ \mu(\calA') = 0 $, we say that it holds almost everywhere (a.e.). Let $ \calI\subset\bbN $ be a finite index set with cardinality $ |\calI| $ and $ \lbrace a_i \rbrace_{i\in\calI} := \lbrace a_i \,|\, i\in\calI \rbrace $. Let $ f_i: \calA\rightarrow\calB $. The maximum and minimum operators are denoted by $ \min_{i\in\calI} f_i $ and $ \max_{i\in\calI} f_i $, respectively, and we define $ \min_{i\in\calI} f_i := 0 $, $ \max_{i\in\calI} f_i := 0 $ for $ \calI = \emptyset $. A function $ \alpha: \bbR_{\geq0} \rightarrow \bbR_{\geq0} $ is a class $ \calK $ function if it is continuous, strictly increasing and $ \alpha(0)=0 $. The left and right sided derivatives of a function $ f(t) $ with respect to $ t $ where $ f:\bbR\rightarrow\bbR $ are defined as $ d_{t^{-}} f(t) := \lim_{\nu\rightarrow 0^{-}} \frac{f(t+\nu)-f(t)}{\nu} $ and $ d_{t^{+}} f(t) := \lim_{\nu\rightarrow 0^{+}} \frac{f(t+\nu)-f(t)}{\nu} $, respectively. For a vector-field $ g $ and a smooth real-valued scalar function $ h $, we denote the Lie-derivative by $ L_{g}h $. The inverse unit step is $ \sigma^{-1}(x) := $  {\tiny $ \begin{cases} 1 & x \leq 0 \\0 & x > 0 \end{cases} $}. Logic \emph{and} and \emph{or} are denoted by $ \wedge $ and $ \vee $, respectively. 


\section{Preliminaries}
\label{sec:preliminaries}

At first, we introduce the system dynamics under consideration, redefine the concept of control barrier functions (CBF) in order to suit the control problem, and review STL-formulas as a formalism for defining complex spatio-temporal constraints on a control problem. 

\vspace{-0.1cm}
\subsection{System Dynamics}

We consider the input-affine system
\begin{align}
\label{eq:input affine system}
\dot{x} = f(x) + g(x) u, \quad x(t_0) = x_0
\end{align}
on the closed time-interval $ \calT = [t_1,t_2] \subseteq \bbR $, $ t_0\in\calT $, where $ x\in\calX \subseteq\bbR^{n} $, $ u \in\bbR^{m} $, and $ f,g $ are continuous functions with respective dimensions. 
Besides, we say that a time-varying set $ \calC(t) \subseteq \calX $ is \emph{forward time-invariant} for system~\eqref{eq:input affine system}, if $ x(t)\in\calC(t) \; \forall t\geq t_0 $ for $ x(t_0) = x_0 \in\calC(t_0) $. 

\vspace{-0.1cm}
\subsection{Non-Smooth Time-Varying Control Barrier Functions}
\label{subseq:non-smooth time-varying CBFs}
Let $ b(t,x) $ be a real-valued function $ b: \calT \times \calX \rightarrow \bbR $, and we define $ T \in \calT $ as the time for which $ b(t,x) \equiv 0 $ for all $ t > T $\footnote{If there is no $ t\in\calT $ such that $ b(t,x) \equiv 0 $ holds, then we can still set $ T = t_2 $.}. We assume that $ b(t,x) $ is continuous and piecewise continuously differentiable in~$ x $, almost everywhere continuously differentiable in~$ t $, and  
\begin{align}
\label{eq:continuity condition on b}
\lim_{\tau\rightarrow t^{-}} b(\tau,x) < \lim_{\tau\rightarrow t^{+}} b(\tau,x) 
\end{align}
holds for discontinuities at $ t < T $. Due to their role in what follows, we call a function $ b(t,x) $ with the aforementioned properties a \emph{barrier function}~(BF).
 
\vspace{-0.1cm}
\begin{definition}[Safe Set]
	\label{def:safe set}
	Let $ \calT $ be the time-interval where \eqref{eq:input affine system} is defined. The set-valued function $ \calC(t) := \lbrace x\in\calX \, | \, b(t,x) \geq 0 \rbrace $ is called a time-varying \emph{safe set} where $ b $ is a barrier function. 
	If $ x(t) \in \calC(t) $ for all times $ t\in\calT $, we call $ x $ \emph{feasible}.
\end{definition}
\vspace{-0.1cm}

Note that $ \calC(t) \equiv \calX $ for $ t > T $, i.e., the condition $ b(t,x) \geq 0 $ is trivially satisfied as $ b(t,x) \equiv 0 $. Next, we derive from the continuity properties of $ b(t,x) $ the following continuity properties of the set-valued function~$ \calC(t) $.

\vspace{-0.15cm}
\begin{lemma}
	\label{lemma:continuity properties of C}
	The time-varying superlevel sets $ \calC'(t) := \lbrace x\in\calX \, | \, b(t,x) \geq c \rbrace $, $ c\in\bbR $, of $ b(t,x) $ are continuous a.e. with respect to $ t $, and at a discontinuity at time $ t $ it holds 
	\begin{align}
	\label{eq:continuity condition of C}
	\lim_{\tau\rightarrow t^{-}}\calC'(\tau) \subseteq \lim_{\tau\rightarrow t^{+}}\calC'(\tau).
	\end{align}
\end{lemma}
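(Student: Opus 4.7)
The plan is to exploit the assumed regularity of $b(t,x)$ in $t$ and the pointwise characterization $x\in\calC'(t) \iff b(t,x)\geq c$ to transfer the continuity and one-sided-limit structure of $b$ in $t$ onto $\calC'$. Since the excerpt does not formally define continuity for set-valued maps, I would adopt Kuratowski--Painlev\'e set convergence (inner and outer limits) throughout and verify both inclusions for the continuity claim, and the single inclusion \eqref{eq:continuity condition of C} for the discontinuity claim.

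First I would fix an arbitrary $c\in\bbR$. The assumption that $b$ is almost-everywhere continuously differentiable in $t$ implies, in particular, that for every $x\in\calX$ the function $b(\cdot,x)$ is continuous in $t$ at almost every $t\in\calT$. At such a continuity point $t_0$, I would establish both set-inclusions: for the outer inclusion, take a sequence $\tau_n\to t_0$ and points $x_n\in\calC'(\tau_n)$ with $x_n\to x_0$; then $b(\tau_n,x_n)\geq c$, and the joint continuity properties of $b$ (continuity in $x$ together with continuity in $t$ at $t_0$) pass to the limit to give $b(t_0,x_0)\geq c$, so $x_0\in\calC'(t_0)$. For the inner inclusion, take $x_0\in\calC'(t_0)$; approximate $x_0$ by points $x_\varepsilon$ with $b(t_0,x_\varepsilon)>c$ using continuity of $b(t_0,\cdot)$ (handling boundary points $b(t_0,x_0)=c$ by perturbation from the interior), and then use continuity of $b$ in $t$ at $t_0$ to deduce $x_\varepsilon\in\calC'(\tau)$ for all $\tau$ in a neighborhood of $t_0$. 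Taking $\varepsilon\to 0$ along a diagonal sequence yields continuity of $\calC'$ at $t_0$.

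Second, for the inclusion \eqref{eq:continuity condition of C} at a discontinuity $t<T$, I would argue directly from condition \eqref{eq:continuity condition on b}. The piecewise regularity assumptions ensure that both one-sided limits $\lim_{\tau\to t^-} b(\tau,x)$ and $\lim_{\tau\to t^+} b(\tau,x)$ exist for every $x$. Fix $x\in \lim_{\tau\to t^-}\calC'(\tau)$; membership in the left set-limit forces $\lim_{\tau\to t^-} b(\tau,x)\geq c$. Applying \eqref{eq:continuity condition on b} then gives $\lim_{\tau\to t^+} b(\tau,x)>\lim_{\tau\to t^-} b(\tau,x)\geq c$, which yields $b(\tau,x)\geq c$, i.e.\ $x\in\calC'(\tau)$, on some right-neighborhood of $t$. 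Hence $x\in\lim_{\tau\to t^+}\calC'(\tau)$, which establishes \eqref{eq:continuity condition of C}.

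The main obstacle I anticipate is purely a matter of precision: the excerpt does not specify the topology on set-valued functions, so the statement has to be interpreted consistently (Kuratowski--Painlev\'e convergence is the natural choice here). The only subtle points are then the boundary of $\calC'(t_0)$, where $b(t_0,x)=c$ and continuity of $b$ in $t$ alone does not preserve membership without an approximation argument, and the fact that the strict inequality in \eqref{eq:continuity condition on b} is exactly what is needed to upgrade a non-strict left-limit $\geq c$ to a strict right-limit $>c$, giving the desired one-sided set inclusion but not the reverse one.
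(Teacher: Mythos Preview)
The paper does not actually supply a proof of Lemma~\ref{lemma:continuity properties of C}: it is stated without argument, and immediately afterwards the reader is referred to \cite[Ch.~5B]{Rockafellar2009} for the relevant notion of continuity of set-valued maps. So there is no paper proof to compare against; your proposal has to be assessed on its own terms.

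Your overall strategy is sound and in line with the reference the paper invokes. Adopting Kuratowski--Painlev\'e convergence is exactly the framework of \cite[Ch.~5B]{Rockafellar2009}, and your argument for the one-sided inclusion~\eqref{eq:continuity condition of C} is clean: passing from $x\in\limsup_{\tau\to t^-}\calC'(\tau)$ to $\lim_{\tau\to t^-}b(\tau,x)\geq c$ via joint regularity, then using the strict jump in~\eqref{eq:continuity condition on b} to obtain $\lim_{\tau\to t^+}b(\tau,x)>c$ and hence membership in the right inner limit, is precisely the mechanism the paper intends (as the sentence following the lemma makes explicit).

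The one genuine soft spot is the inner-semicontinuity step in your first part. The phrase ``handling boundary points $b(t_0,x_0)=c$ by perturbation from the interior'' presupposes that every point of $\partial\calC'(t_0)$ is a limit of points with $b(t_0,\cdot)>c$. That is not a consequence of continuity of $b$ alone: for a generic continuous $b$, superlevel sets are only outer semicontinuous in $t$, and inner semicontinuity can fail exactly at levels where $\{b(t_0,\cdot)>c\}$ does not accumulate on $\{b(t_0,\cdot)=c\}$. In the paper's concrete setting this is harmless---the eventual $b_0$ is built from concave predicate functions, and Assumption~\ref{ass:ccbf} guarantees nonempty interiors so that closure-of-interior equals the superlevel set for almost all $t$---but as written your argument for general barrier functions $b$ needs either this additional structural hypothesis or a restriction to almost every level $c$ (which is enough for the ``a.e.\ in $t$'' conclusion via a Sard-type observation). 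Making that dependence explicit would close the gap.
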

\vspace{-0.15cm}

For details on the continuity of set-valued functions, we refer to \cite[Ch.~5B]{Rockafellar2009}. By~\eqref{eq:continuity condition of C} it is ensured that for a discontinuity at time $ t $ it holds $ x\in\lim_{\tau \rightarrow t^{-}} \calC(\tau) \Rightarrow x\in\lim_{\tau\rightarrow t^{+}}\calC(\tau) $, i.e., a feasible state stays feasible. Finally, we define control barrier functions for the nonsmooth time-varying case as follows. 

\vspace{-0.15cm}
\begin{definition}[Control Barrier Function (CBF)]
	A barrier function $ b(t,x) $ is a \emph{control barrier function} for system~\eqref{eq:input affine system} if there exists a class $ \calK $ function~$ \alpha $ such that 
	\begin{align}
	\label{eq:cbf condition}
	\sup_{u} d_{\delta^{+}} \, b(t\!+\!\delta, x\!+\!\delta (f(x)\!+\!g(x)u)) \bigg|_{\delta = 0} \!\! \geq -\alpha(b(t,x))
	\end{align}
	for all $ x\!\in\!\calC(t) $ and all $ t\!\in\!\calT $ where $ b $ is continuous.
\end{definition}
\vspace{-0.15cm}

\begin{remark}
	The derivative on the left is a right sided directional derivative similar to~\cite[p.~155]{Filippov1988}. The advantage of this formulation is that it is well-defined for Lipschitz-continuous functions $ b $ which are not necessarily everywhere differentiable. As we are concerned with forward invariance, we consider the right-sided directional derivative. In the proof of Proposition~\ref{prop:equivalence optimization problems}, we relate~\eqref{eq:cbf condition} to other commonly used CBF gradient conditions as the one in~\cite{Lindemann2019}.
\end{remark}
\vspace{-0.15cm}


\vspace{-0.1cm}
\subsection{Signal Temporal Logic (STL)}
Next, we briefly review Signal Temporal Logic (STL), and specify the considered class of tasks as an STL-fragment. STL is a predicate logic with temporal operators. A predicate $ p $ has a truth value which is defined by 
$ p := $  {\tiny $
\begin{cases}
\top & \text{if } h(x)\geq 0 \\
\bot & \text{if } h(x)< 0
\end{cases} $}
where $ \top $ and $ \bot $ denote \emph{True} and \emph{False}, respectively, and $ h: \calX \rightarrow \bbR $ a predicate function. The grammar of a general STL formula is given as \cite{Maler2004}
\begin{align}
\label{eq:STL formula}
\theta ::= \top | p | \neg \theta | \theta_1 \vee \theta_2 | \theta_1\calU_{[a,b]}\theta_2
\end{align} 
where $ \theta_1, \theta_2 $ are STL formulas, and $ 0\leq a \leq b $ with $ a,b \in \bbR_{\geq 0} $. The satisfaction relation $ (x,t) \vDash \theta $ indicates that a time-dependent function $ x $ satisfies $ \theta $ from time $ t $ onwards. It is inductively defined as 
\begin{subequations}
	\label{eq:STL grammar}
	\begin{align}
		\label{seq:STL grammar predicates}
		&(x,t) \vDash p &&\Leftrightarrow h(x(t)) \geq 0 \\
		&(x,t) \vDash \neg \theta &&\Leftrightarrow \neg((x,t)\vDash \theta) \Leftrightarrow (x,t) \nvDash \theta \\
		\label{seq:STL or}
		&(x,t) \vDash \theta_1 \vee \theta_2 &&\Leftrightarrow (x,t) \vDash \theta_1 \text{ or } (x,t) \vDash \theta_2 \\
		&(x,t) \vDash \theta_1 \calU_{[a,b]} \theta_2 \!\!\!\!\!\!\!\!&& \Leftrightarrow
		\begin{aligned}[t]
			& \exists t' \!\!\in \!\![t\!+\!a,t\!+\!b] \,\text{s.t.}\, (s,t')\vDash \theta_2 \\ 
			& \text{ and } (x,t'') \vDash \theta_1, \, \forall t''\in [t,t'].
		\end{aligned}
	\end{align} 
\end{subequations}
Instead of $ (x,0)\vDash \theta $, we also write $ x \vDash \theta $ in the sequel. Due to De Morgans law, this grammar also includes conjunctions, defined as $ \theta_1 \wedge \theta_2 := \neg(\neg\theta_1 \vee \neg\theta_2) $. Moreover starting with the \emph{until} operator, the \emph{eventually} and \emph{always} operators can be defined as $ (x,t) \vDash \calF_{[a,b]} \theta := \top \calU_{[a,b]}\theta $ and $ (x,t) \vDash \calG_{[a,b]} \theta := \neg\calF_{[a,b]} \neg\theta $, respectively, and it equivalently holds
\begin{subequations}
	\label{eq:STL grammar eventually and always}
	\begin{align}
		\label{seq:STL grammar eventually}
		(x,t) \vDash \calF_{[a,b]} \theta  &\Leftrightarrow \exists t' \in [t+a,t+b] \text{ s.t. } (x,t') \vDash \theta, \\
		\label{seq:STL grammar always}
		(x,t) \vDash \calG_{[a,b]} \theta  &\Leftrightarrow (x,t') \vDash \theta, \, \forall t'\in[t+a,t+b].
	\end{align}
\end{subequations}

In the sequel, we consider the STL-fragment
\begin{subequations}
	\label{eq:STL fragment}
	\begin{align}
	\label{seq:STL fragment psi}
	\psi &::= \top | p | \psi_1 \vee \psi_2 | \psi_1 \wedge \psi_2 \\
	\label{seq:STL fragment phi}
	\phi &::= \phi_1 \vee \phi_2 | \phi_1 \wedge \phi_2 |\calF_{[a,b]}\psi | \calG_{[a,b]}\psi | \psi_1 \calU_{[a,b]}\psi_2
	\end{align}
\end{subequations}
where $ a,b\in\calT $ and $ a\leq b $, which is the fragment considered in~\cite{Lindemann2019} extended by disjunctions. Below we denote STL-formulas satisfying grammar~\eqref{seq:STL fragment psi} or~\eqref{seq:STL fragment phi} by $ \psi_i $ or $ \phi_{i} $, respectively. As we see later, the problem of vanishing gradients in the presence of disjunctions as encountered in~\cite{Lindemann2019,Lindemann2020a} can be resolved with a non-smooth approach. 

\vspace{-0.15cm}
\begin{assumption}
	\label{ass:predicate function}
	We assume that $ h: \calX \rightarrow \bbR $ is a continuously differentiable and concave function. 
	Let $ {\calH} $ be the set of all maximum points of $ h(x) $, i.e., $ \calH := \lbrace x_0\in\calX \, | \,  \exists \varepsilon > 0: ||x_0 - x|| < \varepsilon \Rightarrow h(x) < h(x_0) \rbrace $. We additionally assume that $ L_{g}h(x) \neq 0 \; \forall x\in\calX\smallsetminus\calH$ and $ L_{g}h(x) \neq 0 $ if $ L_{f}h(x) \neq 0 $ for $ x\in\calH $ (first-order condition on $ h $).
\end{assumption}



\vspace{-0.15cm}
\section{Main Results}
\label{sec:main results}
\vspace{-0.1cm}

In the sequel, we present the construction of a BF which parallels~\cite{Glotfelter2020,Lindemann2019} in parts, and show that it satisfies the properties assumed in Section~\ref{subseq:non-smooth time-varying CBFs}. Thereafter, we outline the proposed control approach and prove the invariance of safe sets for the closed-loop system.
\vspace{-0.2cm}

\subsection{Construction of BFs}
\label{subsec:construction of candiate CBFs}
\vspace{-0.1cm}

Consider an STL-formula $ \phi_{0} $ that satisfies grammar~\eqref{eq:STL fragment} and comprises predicates $ \lbrace p_{i} \rbrace_{i\in\calI^{e}} $ where $ \calI^{e}\subset\bbN $ is an index set; the corresponding predicate functions are $ \lbrace h_{i}(x) \rbrace_{i\in\calI^{e}} $. In this section, our goal is to construct a BF $ b_0(t,x) $ for the STL-formula $ \phi_0 $ such that $ b_{0}(t,x(t)) \geq 0 \; \forall t\in\calT $ implies $ x \vDash \phi_0 $; then we say that $ b_0 $ \emph{implements} the STL-formula~$ \phi_0 $. 

In a first step, we construct the BFs $ \lbrace b_{i} \rbrace_{i\in\calI^{e}} $  for each of the predicates $ \lbrace p_{i} \rbrace_{i\in\calI^{e}} $ which we call \emph{elementary barrier functions}:

\begin{enumerate}[start=0,label={R\arabic*:}]
	\item For $ \psi_i \!=\! p_{i} $, the corresponding BFs are $ b_{i}(t,x) \!:=\! h_i(x) $.
\end{enumerate}

Using the set of elementary BFs as a starting point, we can recursively construct a BF $ b_0 $ implementing $ \phi_0 $. Therefore, we introduce rules for the construction of BFs $ b_{i} $ which implement STL-formulas $ \psi_i $ and $ \phi_i $ satisfying grammar~\eqref{seq:STL fragment psi} and~\eqref{seq:STL fragment phi}, respectively,  as a composition of already constructed BFs $ \lbrace b_{i'} \rbrace_{i'\in\calB_i} $ where $ \calB_{i} \subset \bbN $ is a finite index set. The construction rules are given as follows:

\begin{enumerate}[start=1,label={R\arabic*:}]
	\item If $ \psi_i = \bigwedge_{i'\in\calB_i} \psi_{i'} $, choose $ b_{i}(t,x) = \min_{i'\in\calB_i} b_{i'}(t,x) $.
	\item If $ \psi_i = \bigvee_{i'\in\calB_i} \psi_{i'} $, choose $ b_{i}(t,x) \!=\! \max_{i'\in\calB_i} b_{i'}(t,x) $. 
	\item For $ \phi_{i} = \calF_{[a,b]} \psi_{i'} $, we have $ \calB_i = \lbrace i' \rbrace $ and choose $ b_{i}(t,x) = (b_{i'}(t,x) + \gamma_{i}(t))\sigma^{-1}(t-\beta_{i}) $ where $ \sigma^{-1} $ is the inverse unit step as defined in the notation section, $ \gamma_{i}\!:\! \calT \!\!\rightarrow\! \bbR $ is a continuously differentiable function such~that $ \exists t' \!\in\! [a,b]\!: \gamma_{i}(t') \!\leq\! 0 $, and time $ \beta_{i} \!:=\! \min \lbrace t'  | \gamma_{i}(t') \!\leq\! 0 \rbrace $. 
	\item For $ \phi_{i} \!=\! \calG_{[a,b]} \psi_{i'} $, we have $ \calB_i \!=\! \lbrace i' \rbrace $ and choose $ b_{i}(t,x) \!=\! (b_{i'}(t,x) + \gamma_{i}(t))\sigma^{-1}(t-\beta_{i}) $ where $ \sigma^{-1} $~is the inverse unit step, $ \gamma_{i}\!\!:\! \calT \rightarrow \bbR $ is continuously differentiable, $ \gamma_{i}(t') \!\leq\! 0  $ for all $ t' \!\!\in\! [a,b] $, and time $ \beta_{i} \!\!:=\!\! b $. 
	\item If $ \phi_i \!=\! \bigwedge_{i'\in\calB_i} \phi_{i'} $, choose $ b_{i}(t,x) \!=\! \min_{i'\in\widetilde{\calB}_i(t)} b_{i'}(t,x) $ where $ \widetilde{\calB}_i(t) \!\!:=\!\! \lbrace i' \!\!\in\!\! \calB_i | t \!\leq\! \beta_{i'} \!\rbrace $ which means that $ b_{i'} $ is \emph{deactivated} at time $ \beta_{i'} $, i.e., $ b_{i} $ does not depend on $ b_{i'} $ for $ t>\beta_{i'} $ anymore. In addition, set time $ \beta_{i} := \max_{i'\in\calB_i} \beta_{i'} $.
	\item If $ \phi_i \!\!=\!\! \bigvee_{\!\!i'\!\in\!\calB_i} \!\phi_{i'}\! $, choose $ b_{i}(\!t,\!x\!) \!\!=\!\! \max_{i'\!\in\!\widetilde{\calB}_i\!(\!t\!)} \!b_{i'}\!(\!t,\!x\!)  \sigma^{\!-\!1}\!(\!t\!-\!\beta_{i}\!) $ with $ \widetilde{\calB}_i(t) \!=\! \lbrace i'\in\calB_i \, | \, b_{i'}(\tau,x(\tau)) \geq 0 \; \forall \tau \!\in\! [t_1,t] \rbrace $ and time $ \beta_{i} \!:=\! \min_{i'\in\widetilde{\calB}_i(t)} \beta_{i'} $.
	\item For $ \phi_{i} = \psi_{i'} \calU_{[a,b]} \psi_{i''} $, note that $ (x,t) \vDash \psi_{i'} \calU_{[a,b]} \psi_{i''} \Leftrightarrow \exists t'\in[t+a,t+b] \text{ s.t. } (x,t) \vDash \calG_{[0,t']} \psi_{i'} \wedge \calF_{[a,t']} \psi_{i''} $ and construction rules R2, R3 and R5 can be applied.
\end{enumerate}
For BFs $ b_i $ constructed in R0, we define $ \calB_{i} \!=\! \emptyset $; for $ b_{i} $ constructed in R0, R1, R2, R3 and R4, we set $ \gamma_{i}(t) \!\equiv\! 0 $ and $ \widetilde{\calB}_{i}(t)\!\equiv\!\calB_i $; and for $ b_{i} $ constructed in R0, R1 and R2, we set $ \beta_{i}\!=\!\infty $. We call a scalar $ \beta_{i} $ \emph{deactivation time} of~$ b_i $. Deactivation times reduce the conservativeness of $ b_{0} $, cf.~\cite{Lindemann2019}. The set containing the indices of all BFs $ b_i $ constructed in intermediate steps of the construction of $ b_{0} $ is denoted by $ \calI $ and must be distinguished from $ \calI^{e}\! $. Besides, we require that $ \calB_{i_{1}}\cap\calB_{i_{2}}\!=\!\emptyset $ for all $ i_{1},i_{2}\!\in\!\calI$, $ i_{1}\!\neq\! i_{2} $, i.e., every BF might be used for the construction of at most one other BF and hence $ b_{0} $ assumes a tree structure as illustrated by Figure~\ref{fig:CBF design}. 
We~illustrate the application of the construction rules R0-R7 with~an example.
\vspace{-0.15cm}
\begin{example}
	\label{ex:CBF construction}
	Consider $ \!\phi_0 \!\!=\!\! \calG_{\![a,b]} (h_{11}(x) \!\!\geq\!\! 0) \wedge \calF_{\![c,d]}( h_{211}(x) \!\geq\! 0 \vee h_{212}(x) \!\geq\! 0  ) $; here, we have $ \calI^{e} \!=\! \lbrace 11,211,212 \rbrace $. According to R0, define $ \psi_{11} \!:=\! p_{11} $, $ \psi_{211} \!:=\! p_{211} $, and $ \psi_{212} \!:=\! p_{212} $ with $ h_{11} $, $ h_{211} $, $ h_{212} $ as the respective predicate functions of predicates $ p_{11}, p_{211}, p_{212} $. Then, we start the recursive construction by choosing the elementary BFs as $ b_{11}(t,x) \!:=\! h_{11}(x) $, $ b_{211}(t,x) \!:=\! h_{211}(x) $, $ b_{212}(t,x) \!:=\! h_{212}(x) $. Moreover, we define $ \phi_1 \!:=\! \calG_{[a,b]}\psi_1 $, $ \psi_{21} \!:=\! \psi_{211} \vee \psi_{212} $, $ \phi_2 \!:=\! \calF_{[a,b]}\psi_{21} $, and construct $ b_{1} $, $ b_{21} $, $ b_{2} $ by applying R4, R2, R3, respectively. Since $ \phi_0 \!=\! \phi_1 \wedge \phi_2 $, we finally obtain $ b_0(t,x) \!=\! \min\lbrace h_{11}(x) \!+\! \gamma_1(t), \max\lbrace h_{211}(x), h_{212}(x) \rbrace \!+\! \gamma_2(t) \rbrace $ by applying R5. Besides, $ \calI \!=\! \lbrace 0, 1, 2, 11, 21, 211, 212 \rbrace $. Figure~\ref{fig:CBF design} illustrates the successive construction of $ b_0 $; the chosen indices emphasize the relation of the BFs among each other.
\end{example}
\vspace{-0.15cm}

\begin{figure}[btp]
	\centering
	\def\svgwidth{0.8\columnwidth}
	\import{Images/CBFs/}{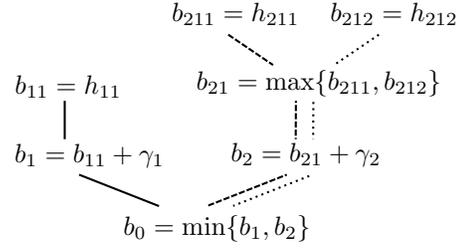}
	\caption{Illustration of the recursive construction of $ b_0 $.}
	\label{fig:CBF design}
	\vspace{-0.65cm}
\end{figure}


Now, we show that $ b_{0} $ exhibits the assumed properties from Section~\ref{subseq:non-smooth time-varying CBFs} and thus constitutes a BF. 

\vspace{-0.15cm}
\begin{lemma}
\label{lemma:b is BF}
	The function $ b_{0}(t,x) $ is a BF.
\end{lemma}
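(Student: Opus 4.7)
The plan is to prove the lemma by structural induction on the construction rules R0--R7 used to build $b_0$. The base case is R0: for $b_i = h_i(x)$, Assumption~\ref{ass:predicate function} gives continuous differentiability in $x$, and $b_i$ is constant in $t$ with no discontinuities, so it is trivially a BF.

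For the inductive step, assuming every child $b_{i'}$ with $i' \in \calB_i$ is already a BF, I would treat the construction rules in groups. Rules R1 and R2 take $\min$/$\max$ of BFs; both operations preserve continuity and piecewise $C^1$ regularity in $x$, and a.e.\ $C^1$ regularity in $t$. Since one-sided limits commute with $\min$ and $\max$ and both operators are monotone, the induction hypothesis $\lim^- b_{i'} \leq \lim^+ b_{i'}$ for each child transfers directly to $b_i$, yielding \eqref{eq:continuity condition on b}. For R3 and R4, the summand $\gamma_i$ is $C^1$ so $b_{i'}+\gamma_i$ remains a BF; multiplication by $\sigma^{-1}(t-\beta_i)$ preserves all regularity on $t \neq \beta_i$ and creates at most one new discontinuity exactly at $\beta_i$. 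Because $b_i \equiv 0$ on $t > \beta_i$, the associated $T_i$ equals $\beta_i$, so this discontinuity need not be checked; any other discontinuities are inherited from $b_{i'}$ and satisfy \eqref{eq:continuity condition on b} by hypothesis.

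The hard part will be R5 and R6. Between consecutive deactivation events, $\widetilde{\calB}_i(t)$ is constant and the argument from R1 or R2 applies verbatim, so the only issue is the behaviour at a deactivation event itself. In R5, at $\tau^* = \beta_{i'}$, I would use left-continuity of $b_{i'}$ at its own $\beta_{i'}$ (the factor $\sigma^{-1}(t-\beta_{i'})$ inherited from R3 or R4 jumps only from the right) together with continuity of each remaining $b_j$ to write $b_i^-(\tau^*) = \min\{b_{i'}(\tau^*,x),\, b_i^+(\tau^*)\} \leq b_i^+(\tau^*)$, giving the jump condition directly. For R6 the situation is subtler because $\widetilde{\calB}_i(t)$ is trajectory-dependent and can shrink at any time $\tau^*$ at which some $b_{i'}(\tau, x(\tau))$ crosses zero. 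The key observation is that by the very definition of $\widetilde{\calB}_i(t)$, every $j \in \widetilde{\calB}_i(\tau^*)$ satisfies $b_j(\tau^*, x(\tau^*)) \geq 0$, while $b_{i'}(\tau^*, x(\tau^*)) = 0$ by continuity, so $b_i^-(\tau^*) = \max\{0,\, b_i^+(\tau^*)\} = b_i^+(\tau^*)$ and the max is in fact continuous at $\tau^*$; the only remaining potential jump is at the overall $\beta_i$, which equals $T_i$ and is therefore excluded. Rule R7 reduces to R2, R3 and R5 and follows immediately. Since R6 makes $b_i$ depend on the past trajectory, the BF property here must be interpreted along a fixed trajectory rather than as a pointwise statement on $\calT \times \calX$; handling this book-keeping carefully is what makes R6 the most delicate step.
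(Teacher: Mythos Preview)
Your proposal is correct and follows the same structural-induction approach as the paper, checking regularity and the jump condition~\eqref{eq:continuity condition on b} rule by rule. Your treatment is in fact more thorough than the paper's on two points: the paper shortcuts R3/R4 by observing that their children (built via R0--R2) are time-independent rather than invoking an induction hypothesis, and for R6 the paper simply asserts that all discontinuities at $t<\beta_i$ are inherited from R5 children without explicitly addressing the trajectory-dependent shrinking of $\widetilde{\calB}_i(t)$, which you handle (and whose conceptual subtlety you correctly flag).
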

\vspace{-0.15cm}

Next, we prove that the satisfaction of the time-dependent state-constraint $ b_{0}(t,x(t)) \geq 0 $ for all $ t\in\calT $ implies the satisfaction of the STL-formula $ \phi_0 $. In the next theorem, let $ b_{\psi} $ implement an STL-formula $ \psi $ satisfying grammar~\eqref{seq:STL fragment psi}, and $ b_{\phi} $ implement $ \phi $ satisfying grammar~\eqref{seq:STL fragment phi}. 

\vspace{-0.15cm}
\begin{theorem}
	\label{thm:task satisfaction}
	If $ b_{\phi}(t,x(t)) \geq 0 $ for all $ t\in\calT $, then $ x \vDash \phi $. Besides, $ b_{\psi}(t,x(t)) \geq 0 \Rightarrow (x,t) \vDash \psi $.
\end{theorem}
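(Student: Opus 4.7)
The plan is a simultaneous structural induction on the recursive construction of $b_{0}$ via rules R0--R7. The inductive statement I would carry is: whenever $b_{i}$ implements $\psi_{i}$ or $\phi_{i}$ by these rules, $b_{i}(t,x(t))\geq 0$ at the relevant time instant (a single $t$ for a $\psi$-BF, every $t\in\calT$ for a $\phi$-BF) forces satisfaction of the underlying formula. The base case R0 is immediate since $b_{i}(t,x)=h_{i}(x)$, so $b_{i}(t,x(t))\geq 0$ together with \eqref{seq:STL grammar predicates} yields $(x,t)\vDash p_{i}$.

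For the $\psi$-rules, R1 and R2 reduce directly to the definition of $\min$ and $\max$ together with \eqref{seq:STL or}: non-negativity of a $\min$ forces every child to be non-negative and hence gives a conjunction through the inductive hypothesis, while non-negativity of a $\max$ produces a witness child and hence a disjunction. For R3 and R4 I would unfold the unit step. In R3, continuity of $\gamma_{i}$ together with $\beta_{i}=\min\{t'\mid\gamma_{i}(t')\leq 0\}\in[a,b]$ gives $\gamma_{i}(\beta_{i})\leq 0$; the assumption $b_{i}(\beta_{i},x(\beta_{i}))\geq 0$ therefore implies $b_{i'}(\beta_{i},x(\beta_{i}))\geq -\gamma_{i}(\beta_{i})\geq 0$, whence the inductive hypothesis combined with \eqref{seq:STL grammar eventually} yields $(x,0)\vDash\calF_{[a,b]}\psi_{i'}$. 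R4 is analogous: the requirement $\gamma_{i}(t')\leq 0$ for every $t'\in[a,b]$, together with the unit step being $1$ on $[a,b]$, forces $b_{i'}(t,x(t))\geq 0$ throughout $[a,b]$, and the inductive hypothesis together with \eqref{seq:STL grammar always} closes the case.

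For R5, I would note that each child $b_{i'}$ constructed via R3/R4 (or nested through R5/R6) is identically zero for $t>\beta_{i'}$ by virtue of its $\sigma^{-1}$ factor; hence $b_{i}(t,x(t))\geq 0$ propagates to $b_{i'}(t,x(t))\geq 0$ on all of $\calT$ for each $i'\in\calB_{i}$, and the inductive hypothesis delivers the conjunction. The delicate case is R6, in which the activation set $\widetilde{\calB}_{i}(t)$ shrinks along the trajectory. The plan there is to pick an index $i^{\star}\in\widetilde{\calB}_{i}(\beta_{i})$ attaining $\beta_{i^{\star}}=\beta_{i}$; its membership in $\widetilde{\calB}_{i}(\beta_{i})$ supplies $b_{i^{\star}}(\tau,x(\tau))\geq 0$ for every $\tau\in[t_{1},\beta_{i}]$, while for $\tau>\beta_{i^{\star}}$ the $\sigma^{-1}$ factor attached to $b_{i^{\star}}$ forces $b_{i^{\star}}(\tau,x(\tau))=0$. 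Thus $b_{i^{\star}}$ is non-negative on all of $\calT$, the inductive hypothesis produces $x\vDash\phi_{i^{\star}}$, and the disjunction follows. Finally, R7 I would dispatch by the equivalence spelled out in the rule, reducing it to the already-treated R2, R3 and R5.

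The hard part will be R6. Two points require attention: first, $\widetilde{\calB}_{i}(\beta_{i})$ must be non-empty so that a witness $i^{\star}$ exists, which is a consequence of $b_{i}(t,x(t))\geq 0$ on $[t_{1},\beta_{i}]$ combined with the convention adopted for an empty-index maximum; second, the witness must retain $b_{i^{\star}}\geq 0$ over the entire horizon $\calT$, which relies on the unit-step factor inherited by every $\phi$-indexed BF from R3 or R4. All remaining rules are routine propagations through the $\min/\max$ operators and the definition of the unit step.
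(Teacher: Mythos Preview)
Your proposal is correct and follows essentially the same structural induction over rules R0--R7 that the paper uses. If anything, you are more explicit than the paper in R5 and R6 (identifying the witness $i^{\star}$ with $\beta_{i^{\star}}=\beta_{i}$ and invoking the $\sigma^{-1}$ factor to extend non-negativity to all of $\calT$), whereas the paper simply asserts the existence of such an index; one minor caveat is that your claim $\beta_{i}\in[a,b]$ in R3 only gives $\beta_{i}\leq b$ from the rule as stated, but the paper's own argument relies on the same implicit design assumption on~$\gamma_{i}$.
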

\vspace{-0.15cm}

In the sequel, we require the following assumption in addition to the fact that $ b_0 $ is a BF.

\vspace{-0.15cm}
\begin{assumption}
	\label{ass:ccbf}
	Let $ x $ be a maximum point of $ b_{0} $ at a given time~$ t $, i.e., $ b_{0}(t,x) \geq b_{0}(t,x') \; \forall x'\in B_{\varepsilon}(x) $ for some $ \varepsilon > 0 $. We assume that there exists a constant $ b_{\text{min}}\in\bbR $ such that $ b_{0}(t,x) > b_{\text{min}} > 0 $ for any maximum point $ x $ of $ b_0 $ at any given time $ t>T $. 
\end{assumption}
\vspace{-0.15cm}

\vspace{-0.15cm}
\begin{remark}
	Assumption~\ref{ass:ccbf} excludes STL-formulas that require predictions in order to ensure forward invariance. Such control tasks are beyond the scope of this paper. 
	In particular, Assumption~\ref{ass:ccbf} implies that there exist connected sets $ \calC_{i'}(t) $ such that $ \calC(t) \!=\! \bigcup_{i'} \calC_{i'}(t) $ where $ \text{Int}(\calC_{i'}(t))\!\neq\!\emptyset $ for all $ t\!\in\!\calT $.
\end{remark}
\vspace{-0.15cm}

In the next section, we present a control scheme based on the constructed BF $ b_0 $ and show that it is a CBF.

\subsection{Controller Design}
\label{subsec:controller design}

In related CBF literature \cite{Ames2019,Glotfelter2020,Lindemann2019}, an optimization problem is solved that ensures the satisfaction of a CBF gradient condition. However, directly solving
\begin{subequations}
	\label{eq:general optimization problem}
	\begin{align}
	&u^{\ast} = \underset{u}{\mathrm{argmin}} \, u^T Q u  \\
	\label{seq:general optimization problem gradient condition}
	&\text{s.t. } d_{\delta^{+}} \, b_{0}(t+\delta, x+\delta (f(x)+g(x)u))\bigg|_{\delta = 0} \geq -\alpha(b_{0}(t,x)),
	\end{align}
\end{subequations}
where~\eqref{seq:general optimization problem gradient condition} ensures the satisfaction of the CBF gradient condition~\eqref{eq:cbf condition}, is numerically difficult as $ b_{0} $ is nonsmooth. Therefore, we subdivide~\eqref{eq:general optimization problem} into multiple basic optimization problems with a simplified gradient condition which can be numerically easily solved.

At first, we define some index sets that help us to describe the tree structure of the BFs constructed in Section~\ref{subsec:construction of candiate CBFs}. Recall that we denote the index set of all BFs as $ \calI $ and the index set of elementary BFs as $ \calI^{e} $. For any $ i,k\in\calI^{e} $, we define the index set 
$ \calQ^{k}_{i}(t) \!:= \!\!
{\scriptsize \begin{cases}
\!\lbrace k \rbrace & \text{if } i=k\\
\!\emptyset\!\! & \text{otherwise}
\end{cases}} $,
and for any $ k\in\calI^{e} $ and $ i\in\calI\smallsetminus\calI^{e} $ we define
\begin{align}
	\label{eq:path index set}
	\calQ^{k}_{i}(t) &\!:= \!\!
	\begin{cases}
		\!\bigcup_{i'\!\in\widetilde{\calB}_{i}\!(t)} \!\calQ_{i'}^{k}\!(t) \!\cup\! \lbrace i \rbrace\!\! & \text{if } \exists i'\!\!\in\!\widetilde{\calB}_{i}\!(t)\!\!:\! \calQ_{i'}^{k}\!(t)\!\neq\! \emptyset \!\! \\
		\!\emptyset\!\! & \text{otherwise}
	\end{cases}
\end{align}
Note that there exists \emph{at most} one $ i'\!\!\in\!\widetilde{\calB}_{i}\!(t) $ such that $ \calQ_{i'}^{k}\!(t)\!\neq\! \emptyset $ because $ \calB_{i_{1}}\cap \calB_{i_{2}}=\emptyset \;\forall i_{1},i_{2}\in\calI $ with $ i_{1}\neq i_{2} $ which is due to the tree structure of all BFs. Therefore, index set $ \calQ_{i}^{k} $ can be interpreted as a branch in the tree structure that connects the BFs with indices $ i $ and $ k $. Moreover, we call $ b_{i'}(t,x) $, $ i'\in\widetilde{\calB}_{i}(t) $ with $ i\in\calI $, an \emph{active} BF of $ b_i $ at $ (t,x)\in\calT\times\calX $ if $ b_{i}(t,x) = b_{i'}(t,x) + \gamma_{i}(t) $. Correspondingly, we define the \emph{active index set} of a BF $ b_{i}(t,x) $ as 
\begin{align}
\label{eq:active index set}
\calI^{a}_{i}(t,x) := \lbrace i' \in\widetilde{\calB}_i(t) \, | \, b_{i}(t,x) = b_{i'}(t,x) + \gamma_{i}(t) \rbrace.
\end{align}
The \emph{active elementary BF index set} for $ i\in\calI $ is defined as
\begin{align}
\label{eq:active elementary BF index set}
\begin{split}
	\calI^{e,a}_{i}\!(t,x)\!&:=\! \lbrace k\in\calI^{e} \, | \, \calQ_{k}^{i}(t) \neq \emptyset \, \wedge \\ 
	& \hspace{0.75cm} b_{i}(t,x) = h_{k}(x) + \textstyle\sum_{i'\in\calQ_{i}^{k}(t)} \gamma_{i'}(t) \rbrace
\end{split}
\end{align}
Index sets $ \calI^{a}_{i} $ and $ \calI^{e,a}_{i} $ allow to simplify $ b_{0} $ for a given time~$ t $ and state~$ x $, and in the following we take advantage of the fact that for $ k\in\calI^{e,a}_{i}(t,x) $ the BF $ b_{i} $ can be written as $ b_{i}(t,x) = h_{k}(x) + \textstyle\sum_{i'\in\calQ_{i}^{k}(t)} \gamma_{i'}(t) $. Furthermore for each pair $ k,l \in\calI^{e,a}_{0} $, there exist unique indices $ i_{kl} $, $ j_{kl} $ and $ q_{kl} $ such that $ i_{kl}, j_{kl} \in \calI^{a}_{q_{kl}} $, $ k\in\calI^{e,a}_{i_{kl}} $, $ l\notin\calI^{e,a}_{i_{kl}} $, $ l\in\calI^{e,a}_{j_{kl}} $ and $ k\notin\calI^{e,a}_{j_{kl}} $. The tree structure of the BFs and the recursive definition of the index sets allow for their efficient computation. We illustrate the meaning of the definitions by revisiting Example~\ref{ex:CBF construction}.

\vspace{-0.2cm}
\begin{example}
	Consider $ \phi_0 $ in Example~\ref{ex:CBF construction}. As all expressions are considered for the same time $ t $ and state $ x $, we omit state and time arguments. According to~\eqref{eq:path index set}, $ \calQ_{0}^{211} \!=\! \lbrace 0, 2, 21, 211 \rbrace $ specifies the indices of the branch connecting BFs with indices $ 0 $ and $ 211 $. Correspondingly, $ \calQ_{0}^{212} \!\!=\!\! \lbrace 0, 2, 21, 212 \rbrace $, $ \calQ_{211}^{211} \!\!=\!\! \lbrace 211 \rbrace $, $ \calQ_{212}^{212} \!\!=\!\! \lbrace 212 \rbrace $. Let $ \calI^{a}_{0} \!=\! \lbrace 2 \rbrace $, $ \calI^{a}_{2} \!=\! \lbrace 21 \rbrace $, $ \calI^{a}_{21} \!=\! \lbrace 211, 212 \rbrace $ be active index sets. Then, as it can be seen from Figure~\ref{fig:CBF design}, we have $ \calI^{e,a}_{0} \!=\! \calI^{e,a}_{2} \!=\! \calI^{e,a}_{21} \!=\! \lbrace 211, 212 \rbrace $, $ \calI^{e,a}_{211} \!=\! \lbrace 211 \rbrace $ and $ \calI^{e,a}_{212} \!=\! \lbrace 212 \rbrace $ according to~\eqref{eq:active elementary BF index set}. From these active elementary index sets, we can determine $ i_{kl} \!=\! 211 $, $ j_{kl} \!\!=\!\! 212 $, $ q_{kl} \!\!=\!\! 21 $ for $ k\!=\!211 $ and $ l\!\!=\!\!212 $. Loosely speaking, $ b_{q_{kl}} $ denotes the BF from which the branches leading to $ b_k $ and $ b_l $ emanate, and $ i_{kl}, j_{kl} $ are chosen such that $ \calQ_{i_{kl}}^{k} $ and $ \calQ_{j_{kl}}^{l} $ do not contain common indices. 
\end{example}
\vspace{-0.1cm}

Next, we define subsets $ \calS_{k}(t)\subseteq\calX $ on the state space as
\begin{align}
\label{eq:subset k}
\calS_k(t) := \lbrace x \,|\, k\in\calI_{0}^{e,a}(t,x) \rbrace
\end{align}
which can be equivalently written as $ \calS_k(t) = \lbrace x \,|\, b_{0}(t,x) = {b}^{k}_{0}(t,x) := h_{k}(x) + \sum_{i'\in\calQ_{0}^{k}(t)} \gamma_{i'}(t) \rbrace $. Since for all times $ t\leq T $ and all states $ x\in\calX $ there exists at least one active elementary BF $ k\in\calI_{0}^{e,a}(t,x) $, it holds that $ \calX = \bigcup_{k\in\calI^{e}} \calS_{k}(t) $. The sets $ \calS_{k}(t) $ enjoy the favorable property that $ b_{0}(t,x) $ is continuously differentiable with respect to $ x $ in the interior $ \text{Int}\,\calS_{k}(t) $ and possibly non-smooth only on $ \partial\calS_{k}(t)\cap\partial\calS_{l}(t) $ for some $ l \in\calI^{e} $. Furthermore, it holds for some $ k,l \in\calI^{e} $ that $ {b}^{k}_{0}(t,x) \!=\! {b}^{l}_{0}(t,x) $ for all $ x\in \calS_{k}(t)\cap\calS_{l}(t) $, and in particular for $ x\in \partial\calS_{k}(t)\cap\partial\calS_{l}(t) $.

Given $ x(t) \in\partial\calS_{k}(t)\cap\partial\calS_{l}(t) $ for some $ t\in\calT\smallsetminus\lbrace \beta_i \rbrace_{i\in\calI} $, we can therefore formulate the condition $ \frac{\partial}{\partial (t,x)} ({b}^{k}_{0}(t,x) - {b}^{l}_{0}(t,x)) [1, (f(x)+g(x)u)^{T}]^{T} \gtreqqless 0 $ to constrain $ u $ such that $ x(\tau)\!\in\!\calS_k(\tau) $ or $ x(\tau)\!\in\!\calS_l(\tau) $ (depending on the choice of $ \geq $ or $ \leq $) for all $ \tau\in[t,t+\delta] $ and a $ \delta \!>\! 0 $. By ensuring that $ x $ stays for some time in the interior of one of the subsets $ \calS_{k}(t) $ or $ \calS_{l}(t) $, we can take advantage of the piece-wise differentiability of $ b_0 $. In the sequel, we generalize this idea for $ x(t) \!\in\! \bigcap_{k\in\calI_{0}^{e,a}} \!\calS_{k}(t) $ and formulate an optimization problem for each $ k\in\calI^{e,a}_{0} $ with a gradient condition which is simplified in comparison to~\eqref{seq:general optimization problem gradient condition}.

For determining the boundary between $ \calS_{k}(t) $ and $ \calS_{l}(t) $, we define 
\begin{align}
	\label{eq:s_kl}
	s_{kl}(t,x) \!\!:=\!\! 
	\begin{cases}
		\!{b}^{l}_{j_{kl}}(t,x) \!-\! {b}^{k}_{i_{kl}}(t,x) \!\!\! & \text{if } b_{q_{kl}} \!\!=\! \min_{i'\in\widetilde{\calB}_{q_{kl}}} \! b_{i'}  \\
		\!{b}^{k}_{i_{kl}}(t,x) \!-\! {b}^{l}_{j_{kl}}(t,x) \!\!\! & \text{if } b_{q_{kl}} \!\!=\! \max_{i'\in\widetilde{\calB}_{q_{kl}}} \! b_{i'} 
	\end{cases}
\end{align}
where $ {b}^{k}_{i}(t,x) := h_{k}(x) + \sum_{i' \in \calQ^{k}_{i}(t)} \gamma_{i'}(t) $ for $ i\in\calI $, $ k\in\calI^{e,a}_{i}(t,x) $. Then for a given time $ t $, $ s_{kl}(t,x) = 0 $ determines those $ x $ on the boundary between $ \calS_{k}(t) $ and $ \calS_{l}(t) $, i.e., $ x\in\partial\calS_{k}\cap\calS_{l} $, $ x\in\calS_{k}\cap\partial\calS_{l} $, or $ x\in\partial\calS_{k}\cap\partial\calS_{l} $ if $ \frac{\partial s_{kl}}{\partial x}(t,x) \neq 0 $. Especially the set $ \partial\calS_{k}\cap\partial\calS_{l} $ is of interest as it can be shown that only there $ b_0 $ is non-differentiable in~$ x $. The directional derivative of $ s_{kl} $ along the trajectory of~\eqref{eq:input affine system} is given as 
\begin{align}
	\label{eq:S_kl_prime}
	s'_{kl}(t,x,u) := \dfrac{\partial s_{kl}}{\partial (t,x)}(t,x) 
	\begin{bmatrix}
		1 \\ f(x)+g(x)u
	\end{bmatrix}.
\end{align}
For an illustration of~\eqref{eq:S_kl_prime}, consider Figure~\ref{fig:illustration_sections}. If $ x(t) \in\partial\calS_{1}(t)\cap\partial\calS_{2}(t) $, $ s'_{12}(t,x,u) \geq 0 $ only admits inputs $ u $ such that $ \dot{x} = f(x) + g(x)u $ points into $ \calS_{1}(t) $, thus $ x(\tau) \in \calS_1(\tau) $ for all $ \tau\in[t,t+\delta] $ and a $ \delta > 0 $.

Using this insight, we define optimization problems with a simplified gradient constraint as
\begin{subequations}
	\label{eq:optimization problem for one section}
	\begin{align}
			\label{seq:simplified optimization problem objective}
			&u^{\ast}_k(t) = \underset{u}{\mathrm{argmin}} \; u^T Q u \\
			\label{seq:simplified optimization problem gradient condition}
			&\text{s.t. } \frac{\partial {b}^{k}_{0}}{\partial x}(t,x) (f(x)+g(x)u) + \frac{\partial {b}^{k}_{0}}{\partial t}(t,x) \geq -\alpha(b_{0}(t,x)) \\
			\label{seq:section constraint}
			&\qquad s'_{kl}(t,x,u) \geq 0 \quad \forall l\in\calI^{e,a}_{0}(t,x), \; l\neq k.
	\end{align}
\end{subequations}
for $ k\in\calI^{e,a}_{0}(t,x) $ where $ Q \in \bbR^{m\times m} $ is a positive-definite matrix and $ \alpha $ a class $ \calK $ function. If~\eqref{eq:optimization problem for one section} has no feasible solution, we set $ u^{\ast}_k = \infty $. Finally, 
\begin{align}
	\label{eq:input}
	u^{\ast}(t) = \underset{u^{\ast}_k \text{ with } k\in\calI^{e,a}_{0}(t,x)}{\mathrm{argmin}}  {u^{\ast}_k(t)}^T Q u^{\ast}_k(t)
\end{align}
is applied as control input to~\eqref{eq:input affine system}. Note that in~\eqref{eq:S_kl_prime} and~\eqref{seq:simplified optimization problem gradient condition} the time derivatives of $ s_{kl} $ and $ b_{0}^{k} $ only exist on $ \calT\smallsetminus\lbrace \beta_{i} \rbrace_{i\in\calQ_{0}^{k}} $. At times $ t\in\lbrace \beta_{i} \rbrace_{i\in\calQ_{0}^{k}} $, $ s_{kl} $ and $ b_{0}^{k} $ might be discontinuous and we consider the left sided derivative instead, i.e., $ d_{t^{-}} s_{kl} $ and $ d_{t^{-}} b_{0}^{k} $, respectively. As it can be seen from the proof in Theorem~\ref{theorem:forward invariance}, this choice is arbitrary and does not impact the invariance result.
Next, we show that there always exists a feasible solution to~\eqref{eq:input} if the class $ \calK $ function $ \alpha $ satisfies the following condition.

\vspace{-0.15cm}
\begin{assumption}
	\label{ass:class K function}
	It holds $ \alpha(b_{\text{min}}) > -\frac{\partial \gamma_{i}}{\partial t}(t) \; \forall t\in\calT, \forall i\in\calI $.
\end{assumption}
\vspace{-0.15cm}

Note that since $ \gamma_{i} $ is continuously differentiable and defined on a closed interval, $ \frac{\partial \gamma_{i}}{\partial t} $ is bounded. As the class $ \calK $ function $ \alpha $ can be freely chosen, there always exists a class $ \calK $ function $ \alpha $ such that Assumption~\ref{ass:class K function} is fulfilled. 

\begin{figure}[btp]
	\centering
	\def\svgwidth{0.6\columnwidth}
	\import{Images/Controller_Design/}{gradients_and_sections.pdf_tex}
	\caption{Illustration of $ b_{0}(t,x) $ for a given $ t $, $ x\in \calX\subseteq\bbR^{2} $ and $ \phi_{0} = \calG_{[a,b]} (h_{1}(x) \geq 0) \wedge \calF_{[c,d]}( h_{2}(x) \geq 0 \vee h_{3}(x) \geq 0  ) $. The lines between the sections $ \calS_{1}, \calS_{2}, \calS_{3} $ indicate $ s_{kl} = 0 $ for $ k,l \in \calI^{e} = \lbrace 1, 2, 3 \rbrace $.}
	\label{fig:illustration_sections}
	\vspace{-0.6cm}
\end{figure}

\vspace{-0.15cm}
\begin{lemma}
	\label{lemma:existence of feasible solution}
	Let Assumption~\ref{ass:class K function} hold. For all $ (t,x) \in \calT \times \calC(t) $, there exist $ k\in\calI^{e,a}_{0}(t,x) $ such that~\eqref{eq:optimization problem for one section} has a feasible and finite solution. 
\end{lemma}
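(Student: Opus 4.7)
\emph{Proof proposal.} The strategy is constructive: for every $(t,x)\in\calT\times\calC(t)$ I exhibit a specific index $k^\ast\in\calI^{e,a}_0(t,x)$ and a candidate input $u$ satisfying both~\eqref{seq:simplified optimization problem gradient condition} and~\eqref{seq:section constraint}. Since all constraints are affine in $u$ and $Q$ is positive definite, non-emptiness of the constraint set immediately yields a finite minimizer $u^\ast_{k^\ast}$. The case $t>T$ is trivial: $b_0\equiv 0$, $\calC(t)=\calX$, only a single active index survives so the section constraints are vacuous, and $u=0$ satisfies~\eqref{seq:simplified optimization problem gradient condition} with equality.

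For $t\le T$ I select $k^\ast$ by descending the tree of Section~\ref{subsec:construction of candiate CBFs} from the root $0$: at each internal node $i$, pick an arbitrary $i'\in\calI^a_i(t,x)$ and recurse until an elementary leaf is reached. The path $\calQ_0^{k^\ast}$ is then fixed, and by definition $b^{k^\ast}_0(t,x)=h_{k^\ast}(x)+\sum_{i'\in\calQ_0^{k^\ast}(t)}\gamma_{i'}(t)$, so~\eqref{seq:simplified optimization problem gradient condition} becomes $L_f h_{k^\ast}(x)+L_g h_{k^\ast}(x)u+\sum_{i'\in\calQ_0^{k^\ast}}\dot\gamma_{i'}(t)\ge-\alpha(b_0(t,x))$, and each $s'_{k^\ast l}(t,x,u)$ in~\eqref{seq:section constraint} is affine in $u$ with slope determined by differences of Lie derivatives along active sibling branches of the tree.

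Verification now splits via Assumption~\ref{ass:predicate function}. If $L_g h_{k^\ast}(x)\neq 0$, the gradient condition defines a genuine half-space in $u$, and a sufficiently large multiple of $L_g h_{k^\ast}(x)^\top$, possibly corrected by a component in the common null-space of the sibling-difference gradients, satisfies~\eqref{seq:simplified optimization problem gradient condition} together with the finitely many constraints in~\eqref{seq:section constraint}; the structural point is that, because $k^\ast$ was chosen along active branches, no section constraint forces $u$ into an empty half-space. If instead $L_g h_{k^\ast}(x)=0$, the first-order part of Assumption~\ref{ass:predicate function} forces $L_f h_{k^\ast}(x)=0$, so $x$ is a maximum of $h_{k^\ast}$ and, by the recursive construction of $b_0$ together with $k^\ast\in\calI^{e,a}_0$, also a maximum of $b_0$. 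Assumption~\ref{ass:ccbf} then yields $b_0(t,x)>b_{\text{min}}$, and Assumption~\ref{ass:class K function}, applied to the single R3/R4 index present on the root-to-leaf path within the fragment~\eqref{eq:STL fragment}, gives $\alpha(b_0)>\alpha(b_{\text{min}})>-\dot\gamma_{i^\ast}(t)$, so~\eqref{seq:simplified optimization problem gradient condition} holds at $u=0$; the section constraints then become $u$-independent and are settled by the same chain of bounds.

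The main obstacle is making the tree-descent choice of $k^\ast$ precise enough that~\eqref{seq:section constraint} is \emph{simultaneously} satisfiable for all $l\neq k^\ast$ in the first sub-case. I expect a short induction on tree depth to close the argument: at each min-node all active siblings share the same value, so the opposite section constraint reduces to an equality along a common gradient direction; at each max-node one is free to pick the sibling with the most favorable Lie derivative. Assembling these per-node choices into a single direction for $u$ is essentially a Farkas-type check across finitely many affine inequalities—routine but the genuine technical bottleneck of the proof.
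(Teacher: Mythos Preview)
Your approach diverges from the paper's in a way that creates a real gap. The paper does \emph{not} pick $k^\ast$ by an arbitrary tree descent; it exploits the concavity of the predicate functions (Assumption~\ref{ass:predicate function}) to argue that $b_0$ is piecewise concave in~$x$, finds an ascent direction $d$ for $b_0$ whenever $x$ is not a local maximum of~$b_0$, and \emph{then} selects $k$ as an index with $x+d\in\calS_k(t)$. This geometric choice guarantees $x\notin\calH_k$, hence $L_g h_k(x)\neq 0$, and the section constraints~\eqref{seq:section constraint} are automatically compatible with moving along~$d$. You never invoke concavity, and your ``arbitrary $i'\in\calI^a_i$'' descent gives no such guarantee; this is precisely why your non-degenerate case remains an unresolved ``Farkas-type check''.

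The more concrete error is in your degenerate case. From $L_g h_{k^\ast}(x)=0$ you correctly deduce $x\in\calH_{k^\ast}$, but the step ``by the recursive construction \dots\ also a maximum of $b_0$'' is false once disjunctions are present. Take $b_0=\max(h_1,h_2)$ with $h_1(x)=h_2(x)$, $x$ a maximum of $h_1$, and $\nabla h_2(x)\neq 0$: both indices are active, your descent may return $k^\ast=1$, yet $b_0$ strictly increases along $\nabla h_2$, so $x$ is not a maximum of~$b_0$ and Assumption~\ref{ass:ccbf} does not apply. Consequently you cannot conclude $b_0(t,x)>b_{\min}$, and the inequality $\alpha(b_0)>-\dot\gamma_{i^\ast}$ is unjustified. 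The paper sidesteps this entirely by making ``$x$ is a maximum of $b_0$'' the case distinction itself rather than a consequence of the chosen leaf, which is why its two cases close cleanly.
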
 
\vspace{-0.15cm}

As it can be seen from the proof, in Assumption~\ref{ass:class K function} the class~$ \calK $ function $ \alpha $ is chosen such that no increase in $ b_{0} $ is required in~\eqref{seq:simplified optimization problem gradient condition} by varying the system's state $ x $ if $ b_{0}(t,x) \geq b_{\text{min}} $. This is especially important when $ x $ is already a maximum point of $ b_{0} $ for a given time $ t $ and a variation of $ x $ does not lead to an increase on $ b_{0} $. Loosely speaking, $ \alpha $ determines when $ b_{0}(t,x) $ is sufficiently close to zero such that the controller reacts in order ensure the invariance of the safe set~$ \calC(t) $, whereas functions $ \gamma_{i} $ determine how ``quickly'' state~$ x $ has to change. Thereby, $ \gamma_{i} $ is decisive for the magnitude of the control input~$ u $.

In the remainder of this section, we prove the forward invariance of $ \calC(t) $. Therefore, consider a solution $ \varphi\!:\! \calT \!\!\rightarrow\! \calX $ to the closed-loop system~\eqref{eq:input affine system} when control input~\eqref{eq:input} is applied. At first, we investigate the continuity properties of the control input trajectory $ u\!:\! \calT \!\!\rightarrow\! \bbR^{m} $ of the closed-loop system and show that $ \varphi $ is a Carathéodory solution. However, this is not obvious as $ u(t,x) $ is discontinuous both in $ t $ and~$ x $.

\vspace{-0.15cm}
\begin{lemma}
	\label{lemma:input continuity}
	The control input trajectory $ u^{\ast}: \calT \rightarrow \bbR^{m} $ of the closed-loop system~\eqref{eq:input affine system} with controller~\eqref{eq:optimization problem for one section}-\eqref{eq:input} is continuous a.e.
\end{lemma}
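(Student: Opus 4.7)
The plan is to classify where $u^{\ast}(\cdot)$ can fail to be continuous and to argue each such subset of $\calT$ has Lebesgue measure zero. Three sources of discontinuity come into view: (i) the finite set $\calD := \lbrace \beta_i : i \in \calI, \, \beta_i < \infty \rbrace$ of deactivation times, at which $b_0$ and hence the coefficients of \eqref{eq:optimization problem for one section} may jump in $t$; (ii) the set $\calN$ of times at which $\varphi(t)$ lies on some nonsmoothness surface $\lbrace x : s_{kl}(t,x) = 0 \rbrace$, so that the active elementary index set $\calI^{e,a}_{0}(t,\varphi(t))$ changes; and (iii) the tie set $\calE$ of times where two distinct candidates in \eqref{eq:input} attain the same minimum cost, so that the argmin can switch. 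Since $\calI$ is finite, $\mu(\calD) = 0$ is immediate.

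For continuity off $\calD \cup \calN \cup \calE$, I would fix $t_0$ in this complement. Then $\varphi(t_0) \in \mathrm{Int}\, \calS_{k^{\ast}}(t_0)$ for a unique $k^{\ast} \in \calI^{e,a}_{0}(t_0, \varphi(t_0))$, and continuity of both $\varphi$ and of each $s_{kl}$ at $t_0$ (using $t_0 \notin \calD$) preserves this singleton active set on a neighborhood $U \ni t_0$. On $U$ the optimization problem \eqref{eq:optimization problem for one section} for $k = k^{\ast}$ collapses to a strictly convex QP with a single affine constraint \eqref{seq:simplified optimization problem gradient condition}; its coefficients depend continuously on $(t,\varphi(t))$ and, by Lemma~\ref{lemma:existence of feasible solution}, the feasible halfspace is non-empty. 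The unique minimizer can be written explicitly via the KKT conditions as a continuous function of the QP data (it is either zero, if the constraint is slack, or the $Q$-weighted projection of the origin onto a continuously varying halfspace). Thus $t \mapsto u^{\ast}_{k^{\ast}}(t)$ is continuous at $t_0$, and since $t_0 \notin \calE$ the selection $k^{\ast}$ in \eqref{eq:input} remains unchallenged on a possibly smaller neighborhood, yielding continuity of $u^{\ast}$ at $t_0$.

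The main obstacle is showing $\mu(\calN \cup \calE) = 0$. For $\calN$: Assumption~\ref{ass:predicate function} together with the construction rules R0--R7 ensures that $\partial s_{kl}/\partial x$ is nonvanishing a.e.\ on $\calX$, so for a.e.\ $t$ each level set $\lbrace x : s_{kl}(t,x) = 0 \rbrace$ is a $C^1$-hypersurface of codimension one in $\calX$. Because $u^{\ast}$ is locally bounded by Lemma~\ref{lemma:existence of feasible solution}, the closed-loop trajectory $\varphi$ is absolutely continuous; a transversality argument then shows that $\varphi$ either crosses a given hypersurface at an at-most-countable set of times, or slides along it on an interval $I$, during which $\calI^{e,a}_{0}(t,\varphi(t))$ is constant and the controller is of the form analyzed in the previous paragraph, so no discontinuity of $u^{\ast}$ arises on $I$. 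For $\calE$: the costs $t \mapsto u^{\ast T}_k Q u^{\ast}_k$ are continuous off $\calN \cup \calD$ by the previous paragraph, so $\calE$ is closed; at any interior point of $\calE$ the two tied candidates necessarily coincide as vectors by the explicit KKT formulas (a quadratic program with the same data admits a unique optimizer), so no genuine jump in $u^{\ast}$ occurs there, while the topological boundary of $\calE$ in $\calT$ is contained in the zero set of a continuous function that is not identically zero on any interval, hence has measure zero. Combining these three estimates concludes the proof.
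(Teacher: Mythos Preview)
Your decomposition into $\calD$, $\calN$, $\calE$ is natural, but it misses the mechanism the paper actually uses and contains two genuine gaps.

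First, the $\calE$ argument is wrong. For distinct $k,l\in\calI^{e,a}_0(t,x)$ the candidates $u^\ast_k$ and $u^\ast_l$ solve \emph{different} quadratic programs: they share the objective $u^{T}Qu$, but the constraint sets \eqref{seq:simplified optimization problem gradient condition}--\eqref{seq:section constraint} depend on $k$ (different gradients $\partial b_0^k/\partial x$, different section constraints $s'_{kl}$). Equality of optimal costs therefore does \emph{not} force equality of optimizers, so a switch of the argmin in \eqref{eq:input} can produce a genuine jump in $u^\ast$. Your appeal to ``a quadratic program with the same data'' is simply false here. Second, in the sliding branch of your $\calN$ analysis you say the active set is constant on $I$ and the controller reduces to the single-QP situation of the previous paragraph. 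But if $\varphi$ slides along $\{s_{kl}=0\}$ then both $k$ and $l$ belong to $\calI^{e,a}_0(t,\varphi(t))$ throughout $I$; the active set is constant but not a singleton, the selection \eqref{eq:input} is live, and you are back in the $\calE$ situation you have not resolved.

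The paper avoids both difficulties by not attempting to bound $\mu(\calN)$ or $\mu(\calE)$ at all. Its key observation is that constraint \eqref{seq:section constraint} is \emph{designed} so that whichever $u^\ast_k$ is applied, the closed-loop direction satisfies $s'_{kl}(t,x,u^\ast_k)\ge 0$ for all competing $l$, hence $\varphi(\tau)\in\calS_k(\tau)$ on a forward interval $[t,t+\varepsilon]$. On that interval the data of the QP with index $k$ vary continuously (they are built from the $C^{1}$ functions $h_k$ and $\gamma_i$), so $u^\ast_k(\cdot)$ is continuous there. In other words, the section constraint itself prevents the active index from switching against the chosen $k$ on a forward interval; this is the structural ingredient your measure-theoretic decomposition does not exploit.
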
 
\vspace{-0.15cm}

\vspace{-0.15cm}
\begin{corollary}
	\label{cor:continuity prop of varphi}
	A solution $ \varphi: \calT \rightarrow \calX $ to the closed-loop system~\eqref{eq:input affine system} with controller~\eqref{eq:optimization problem for one section}-\eqref{eq:input} is a Carathéodory solution. Moreover, it holds for the right sided derivative that $ d_{t^{+}} \varphi(t) = f(\varphi(t))+g(\varphi(t))u^{\ast}(t) $ for $ t\notin\lbrace \beta_{i} \rbrace_{i\in\calI} $.
\end{corollary}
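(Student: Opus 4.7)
The plan is to leverage Lemma~\ref{lemma:input continuity} to invoke standard Carathéodory ODE theory for part one, and then to strengthen the analysis at continuity points of $u^{\ast}$ to obtain the right-sided derivative formula for part two.

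First, I would observe that $f$ and $g$ are continuous, so $x\mapsto f(x)+g(x)u$ is continuous for every fixed $u$. By Lemma~\ref{lemma:input continuity}, the input trajectory $u^{\ast}(t)$ is continuous almost everywhere on $\calT$, hence measurable, and since for each $(t,x)$ the optimization~\eqref{eq:optimization problem for one section}--\eqref{eq:input} returns a finite minimizer (by Lemma~\ref{lemma:existence of feasible solution}) that is bounded on compact subsets of $\calT\times\calX$, $u^{\ast}$ is locally bounded. Therefore the closed-loop right-hand side $F(t,x):=f(x)+g(x)u^{\ast}(t)$ is a Carathéodory vector field: measurable in $t$, continuous in $x$, and locally bounded. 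This is exactly what is needed to guarantee that any solution $\varphi$ is absolutely continuous and satisfies $\dot{\varphi}(t)=F(t,\varphi(t))$ for almost every $t\in\calT$, i.e.\ $\varphi$ is a Carathéodory solution. This establishes the first claim.

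For the right-sided derivative at $t\notin\{\beta_i\}_{i\in\calI}$, absolute continuity of $\varphi$ gives
\begin{equation*}
\varphi(t+\delta)-\varphi(t) \;=\; \int_{t}^{t+\delta} \bigl(f(\varphi(s))+g(\varphi(s))u^{\ast}(s)\bigr)\,ds,
\end{equation*}
so $d_{t^{+}}\varphi(t) = \lim_{\delta\to 0^{+}} \tfrac{1}{\delta}\int_{t}^{t+\delta}(f(\varphi(s))+g(\varphi(s))u^{\ast}(s))\,ds$, and the fundamental theorem of calculus yields the desired identity at every point where the integrand is right-continuous in $s$. Since $f,g$ and $\varphi$ are continuous, only the right-continuity of $u^{\ast}$ at $t$ is at issue.

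The main obstacle, then, is to argue that $u^{\ast}$ is right-continuous at every $t\notin\{\beta_i\}_{i\in\calI}$. The jump set of $u^{\ast}$ decomposes into two kinds of events: deactivation times $\beta_i$, at which the tree $\widetilde{\calB}_{i}(t)$ shrinks and the optimization problem~\eqref{eq:optimization problem for one section} changes structurally; and section-crossing times, at which $\varphi(t)$ moves between sets $\calS_k(t)$. For $t\notin\{\beta_i\}_{i\in\calI}$, the data of~\eqref{eq:optimization problem for one section}--\eqref{eq:input} (the coefficients in~\eqref{seq:simplified optimization problem gradient condition} and~\eqref{seq:section constraint}, the active index sets, and the terms $\gamma_{i'}(t)$) depend continuously on $(t,x)$ from the right: constraint~\eqref{seq:section constraint} forces $s_{kl}'\geq 0$, so once $k\in\calI^{e,a}_{0}(t,\varphi(t))$ is selected as the minimizer in~\eqref{eq:input}, the trajectory remains in $\calS_{k}(\tau)$ for $\tau\in[t,t+\delta]$ with some $\delta>0$, making the active index set non-increasing locally to the right. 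Together with continuity of $f,g,\varphi$ and of the QP value function with respect to its continuously varying data (a standard consequence of strong convexity of $u^TQu$ and constraint qualification of the linear constraints, borrowing the argument already used in Lemma~\ref{lemma:input continuity}), this yields right-continuity of $u^{\ast}$ at every $t\notin\{\beta_i\}_{i\in\calI}$, completing the proof.
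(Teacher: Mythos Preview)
Your proposal is correct and follows essentially the same route as the paper's proof: measurability of $u^{\ast}$ from Lemma~\ref{lemma:input continuity}, Carath\'eodory existence via the integral form, and then the right-sided derivative from right-continuity of $u^{\ast}$ at $t\notin\{\beta_i\}_{i\in\calI}$. The only differences are cosmetic: the paper cites Lusin's theorem for measurability and invokes directly that $u^{\ast}$ is continuous on a full right-neighborhood $[t,t+\varepsilon]$ (this is established inside the proof of Lemma~\ref{lemma:input continuity}), whereas you re-derive right-continuity in more detail and add the local boundedness observation via Lemma~\ref{lemma:existence of feasible solution}, which the paper leaves implicit.
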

\vspace{-0.1cm}

\vspace{-0.1cm}
\begin{remark}
	Solution $ \varphi $ is not necessarily unique. If multiple $ u_{k}^{\ast} $ minimize the objective in \eqref{eq:input}, then any of the inputs could be applied. Depending on which input is chosen, different solutions $ \varphi $ are obtained.
\end{remark}
\vspace{-0.1cm}

Next, we compare the controllers~\eqref{eq:general optimization problem} and~\eqref{eq:optimization problem for one section}-\eqref{eq:input}. This result facilitates the proof of forward invariance of the safe set in Theorem~\ref{theorem:forward invariance}, and allows us to compare the proposed non-smooth CBF approach to other CBF-based controllers. 

\vspace{-0.15cm}
\begin{proposition}
	\label{prop:equivalence optimization problems}
	The optimization problem~\eqref{eq:general optimization problem}
	is equivalent to \eqref{eq:optimization problem for one section}-\eqref{eq:input} for  all $ t\in\calT\smallsetminus\lbrace \beta_{i} \rbrace_{i\in\calI} $. 
\end{proposition}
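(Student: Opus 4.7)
The plan is to show that the feasible set of~\eqref{eq:general optimization problem} coincides with the union over $k\in\calI^{e,a}_{0}(t,x)$ of the feasible sets $\calU_k$ of the simplified problems~\eqref{eq:optimization problem for one section}. Since both formulations share the same quadratic objective $u^{T}Qu$, this immediately gives $\min_{u\in\calU_0} u^{T}Qu = \min_{k}\min_{u\in\calU_k} u^{T}Qu$, which is exactly~\eqref{eq:input}, so the optimizers agree. Fix $t\in\calT\smallsetminus\lbrace\beta_{i}\rbrace_{i\in\calI}$ and $x\in\calC(t)$. Because $t$ avoids every deactivation time, each set $\calQ_{0}^{k}(\cdot)$ is locally constant, each $b_{0}^{k}(\tau,y)=h_{k}(y)+\sum_{i'\in\calQ_{0}^{k}(t)}\gamma_{i'}(\tau)$ is continuously differentiable near $(t,x)$, and $b_{0}$ is continuous there; hence the partial derivatives in~\eqref{seq:simplified optimization problem gradient condition} are classical.

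The technical core is a first-order analysis of $b_{0}(t+\delta,x+\delta v)$ for $v=f(x)+g(x)u$ and small $\delta>0$. Since $b_{0}$ is a nested composition of $\min$ and $\max$ operators over the tree of BFs, I would unfold it so that $b_{0}(t+\delta,x+\delta v)$ becomes a nested min/max of $\lbrace b_{0}^{k}(t+\delta,x+\delta v)\rbrace_{k\in\calI^{e}}$. Non-active indices $k\notin\calI^{e,a}_{0}(t,x)$ give strictly larger (at a min node) or smaller (at a max node) values at $\delta=0$, and remain so for small $\delta$ by continuity, so only the active indices contribute to the right-sided directional derivative. Taylor-expanding each active $b_{0}^{k}$, the nested min/max is realized by a \emph{winning} index $k^{*}\in\calI^{e,a}_{0}(t,x)$, namely one whose first-order value survives every internal competition. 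I then identify when a given $k$ is a winner: at each pairwise split between $k$ and an active cousin $l$, which takes place at the lowest common ancestor $q_{kl}$, the $\gamma$-offsets along the shared path $\calQ_{0}^{q_{kl}}$ cancel (formally using $\calQ_{0}^{k}=(\calQ_{0}^{q_{kl}}\smallsetminus\lbrace q_{kl}\rbrace)\cup\calQ_{q_{kl}}^{k}$ and the symmetric decomposition for $l$), so the comparison between $b_{0}^{k}$ and $b_{0}^{l}$ reduces to one between $b_{i_{kl}}^{k}$ and $b_{j_{kl}}^{l}$. Taking first-order expansions, $k$ beats $l$ in the direction $(1,v)$ precisely when $s'_{kl}(t,x,u)\geq 0$, with $s_{kl}$ as defined in~\eqref{eq:s_kl}.

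Putting this together, $u$ satisfies~\eqref{seq:general optimization problem gradient condition} if and only if there exists $k\in\calI^{e,a}_{0}(t,x)$ such that (i) $s'_{kl}(t,x,u)\geq 0$ for every $l\in\calI^{e,a}_{0}(t,x)\smallsetminus\lbrace k\rbrace$, ensuring $k$ is a winner in the direction $(1,v)$, and (ii) $\frac{\partial b_{0}^{k}}{\partial x}(f(x)+g(x)u)+\frac{\partial b_{0}^{k}}{\partial t}\geq -\alpha(b_{0}(t,x))$, since for that winning $k$ the right-sided directional derivative of $b_{0}$ equals $d_{\delta^{+}}b_{0}^{k}$. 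Conditions (i) and (ii) are exactly~\eqref{seq:section constraint} and~\eqref{seq:simplified optimization problem gradient condition}, so $\calU_{0}=\bigcup_{k\in\calI^{e,a}_{0}(t,x)}\calU_{k}$, and~\eqref{eq:input} picks out the globally optimal $u$. The main obstacle is the recursive tree-unfolding step: carrying out the nested min/max first-order expansion rigorously, and in particular handling the degenerate tie case $s'_{kl}(t,x,u)=0$ by noting that both $k$ and $l$ then realize the same first-order value and either can serve as a valid winner, so the union decomposition remains intact.
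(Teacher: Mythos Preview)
Your proposal is correct and shares the paper's core idea: for any $u$, the right-sided directional derivative $d_{\delta^{+}} b_{0}(t+\delta,x+\delta(f+gu))|_{\delta=0}$ equals the classical gradient $\tfrac{\partial b_{0}^{k}}{\partial x}(f+gu)+\tfrac{\partial b_{0}^{k}}{\partial t}$ for precisely those $k\in\calI^{e,a}_{0}(t,x)$ with $s'_{kl}(t,x,u)\geq 0$ for all active $l\neq k$, and at least one such $k$ always exists. The paper establishes this more tersely and, somewhat indirectly, routes the computation through the closed-loop solution $\varphi$ via Corollary~\ref{cor:continuity prop of varphi} and the Filippov identity $d_{t^{+}}b_{0}(t,\varphi(t))=d_{\delta^{+}}b_{0}(t+\delta,\varphi(t)+\delta\dot\varphi(t))|_{\delta=0}$, then simply asserts the key ``for any $u$ there exists $k$ with $s'_{kl}\geq 0$'' step. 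Your argument instead works directly at a fixed $(t,x)$, frames the equivalence as $\calU_{0}=\bigcup_{k}\calU_{k}$, and actually carries out the recursive min/max unfolding with the common-ancestor decomposition $\calQ_{0}^{k}=(\calQ_{0}^{q_{kl}}\smallsetminus\{q_{kl}\})\cup\calQ_{q_{kl}}^{k}$ and the tie case $s'_{kl}=0$. This buys you a self-contained proof that does not depend on properties of the closed-loop trajectory, which is conceptually cleaner for a statement that is purely about the optimization problems; the paper's route, on the other hand, reuses machinery it needs anyway for Theorem~\ref{theorem:forward invariance}.
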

\vspace{-0.15cm}


In contrast to~\eqref{eq:general optimization problem}, the constraints in $ \eqref{eq:optimization problem for one section} $ can be more easily evaluated since $ {b}^{k}_{0} $ is differentiable contrary to~$ b_0 $. Based on the previous results, we can prove forward invariance of $ \calC(t) $.

\vspace{-0.15cm}
\begin{theorem}
	\label{theorem:forward invariance}
	The control law~\eqref{eq:optimization problem for one section}-\eqref{eq:input} renders $ \calC(t) $ forward invariant for all $ t\in\calT $, and $ b_0 $ is a CBF.
\end{theorem}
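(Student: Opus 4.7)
The plan is to first establish that $b_0$ is a CBF and then to leverage this, together with the Carathéodory property of $\varphi$ and the upward‐jump structure of $b_0$ at the deactivation times, to obtain forward invariance by a piecewise comparison argument.

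To see that $b_0$ is a CBF, I would combine Lemma~\ref{lemma:existence of feasible solution} with Proposition~\ref{prop:equivalence optimization problems}. The lemma guarantees that for every $(t,x) \in \calT \times \calC(t)$ there is some $k \in \calI^{e,a}_0(t,x)$ for which~\eqref{eq:optimization problem for one section} is feasible, so the minimization in~\eqref{eq:input} admits a finite $u^*(t)$. By the equivalence established in Proposition~\ref{prop:equivalence optimization problems}, this same $u^*(t)$ is feasible for the CBF gradient constraint~\eqref{seq:general optimization problem gradient condition}, which is precisely condition~\eqref{eq:cbf condition}. Hence $b_0$ satisfies the CBF property at every $(t,x) \in \calT \times \calC(t)$ where $b_0$ is continuous in $t$, proving the second claim.

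For forward invariance, I would first restrict attention to an open interval $(\beta_i, \beta_{i+1})$ between two consecutive deactivation times, on which $b_0(\cdot,\cdot)$ is continuous in $t$. By Corollary~\ref{cor:continuity prop of varphi}, $\varphi$ is a Carathéodory solution with $d_{t^+}\varphi(t) = f(\varphi(t)) + g(\varphi(t))u^*(t)$, and since $b_0$ is Lipschitz in $x$ and a.e.\ differentiable in $t$, the scalar map $\chi(t) := b_0(t,\varphi(t))$ is absolutely continuous on this interval. Proposition~\ref{prop:equivalence optimization problems} identifies the right-sided directional derivative appearing in~\eqref{eq:cbf condition} with $d_{t^+}\chi(t)$ along the closed-loop flow, so
\begin{equation*}
d_{t^+}\chi(t) \;\geq\; -\alpha(\chi(t)) \quad \text{for a.e.\ } t \in (\beta_i,\beta_{i+1}).
\end{equation*}
A standard nonsmooth comparison argument against the scalar ODE $\dot z = -\alpha(z)$, $z(\beta_i^+) \geq 0$, then yields $\chi(t) \geq 0$ throughout the interval, i.e.\ $\varphi(t) \in \calC(t)$.

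To stitch across the jumps, I would use Lemma~\ref{lemma:continuity properties of C}: if $\chi(\beta_i^-) \geq 0$, then $\varphi(\beta_i) \in \lim_{\tau\to\beta_i^-}\calC(\tau) \subseteq \lim_{\tau\to\beta_i^+}\calC(\tau)$ by~\eqref{eq:continuity condition of C}, so $\chi(\beta_i^+) \geq 0$ as well. Induction over the finite list $\{\beta_i\}_{i\in\calI}$ (together with the interval argument above between consecutive jumps) gives $\varphi(t) \in \calC(t)$ for all $t \in \calT$. The main obstacle I expect is a rigorous justification of the comparison step when $\varphi$ crosses a nondifferentiability $\partial \calS_k \cap \partial \calS_l$: between jumps, $b_0$ is only piecewise smooth in $x$, so one must verify that the selector~\eqref{eq:input} and the separating constraints~\eqref{seq:section constraint} actually make the flow remain in (or enter) a single branch $\calS_k$ on a small interval $[t, t+\delta]$, thereby allowing the smooth condition~\eqref{seq:simplified optimization problem gradient condition} to be read as a genuine lower bound on $d_{t^+}\chi(t)$. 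This is exactly what Proposition~\ref{prop:equivalence optimization problems} is designed to deliver, and the remainder of the argument is then a routine Nagumo-type invariance statement.
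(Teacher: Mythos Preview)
Your proposal is correct and follows essentially the same route as the paper's proof: invoke Corollary~\ref{cor:continuity prop of varphi} and Proposition~\ref{prop:equivalence optimization problems} to obtain $d_{t^{+}} b_0(t,\varphi(t)) \geq -\alpha(b_0(t,\varphi(t)))$ on each connected component of $\calT\smallsetminus\{\beta_i\}_{i\in\calI}$, conclude invariance there by comparison, and then use the upward-jump property~\eqref{eq:continuity condition on b}/\eqref{eq:continuity condition of C} (your Lemma~\ref{lemma:continuity properties of C}) to bridge the deactivation times. Your write-up is simply more explicit than the paper's terse version---you spell out the comparison step, the induction over $\{\beta_i\}$, and the separate use of Lemma~\ref{lemma:existence of feasible solution} for the CBF claim---but the underlying argument is the same.
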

\vspace{-0.15cm}



\section{Simulations}
\label{sec:simulations}

Similarly to~\cite{Lindemann2019}, we consider a multiagent system comprising three omnidirectional robots which are modeled as in~\cite{Liu2008} and use a collision avoidance mechanism as in~\cite{Lindemann2019}. The state of agent $ i $ is given as $ x_i = [p^T_i, \rho_i]^T $ where $ p_i = [x_{i,1},x_{i,2}] $ denotes its position and $ \rho_i $ its orientation; the state of all agents together is given as $ x = [x_1^T, x_{2}^{T}, x_{3}^{T}]^{T} $. The dynamics of agent $ i $ are 
\begin{align*}
	\dot{x}_{i} &= f_i(x) + 
	\begin{bmatrix}
	\cos(\rho_i) & -\sin(\rho_i) & 0 \\
	\sin(\rho_i) & \cos(\rho_i) & 0 \\
	0 & 0 & 1
	\end{bmatrix}
	(B_{i}^{T})^{-1} R_i u_i 
\end{align*}
where $ f_{i}(x) = [f_{i,1}(x), f_{i,2}(x), 0]^{T} $ with $ f_{i,k}(x) = \sum_{j=1, i\neq j}^{3} k_i \frac{x_{i,k} - x_{j,k}}{||p_{i}-p_{j}||^{2}+0.00001} $, $ k_{i}>0 $, $ B_{i} = $ {\tiny $ \begin{bmatrix}
	0 & \cos(\pi/6) & -\cos(\pi/6) \\ 
	-1 & \sin(\pi/6) & \sin(pi/6) \\
	L_i & L_i & L_i
\end{bmatrix} $}
with $ L_i = 0.2 $ as the radius of the robot body, $ R_i = 0.02 $ is the wheel radius, and $ u_i $ is the angular velocity of the wheels and serves as control input. As required, the system is input-affine and $ f_i $, $ g_i $ are continuous. Besides, we admit sufficiently large inputs to the system.

The task for the three agents comprises three parts: (1) approaching each other: $ \phi_1 := \calF_{[10,20]}(||p_1-p_2||\leq 10 \vee ||p_1-p_3||\leq 10 \vee ||p_2-p_3||\leq 10) \wedge \calG_{[20,60]}(||p_2-p_3||\leq 15) $; (2) moving to given points: $ \phi_{2} := (||p_3-[-5,-5]^{T}||\leq10)\calU_{[5,20]}(||p_1-p_2||\leq 10) \wedge \calF_{[10,20]}(||p_1-[0,30]^{T}||\leq 10) \wedge \calG_{[50,60]}(||p_{1}-[30,0]^{T}||\leq 10) \wedge (\calG_{[50,60]}(||p_2-[-30,-30]^{T}||\leq 10) \vee \calG_{[50,60]}(||p_3-[30,-30]^{T}||\leq 10)) $; and (3) staying within a defined area: $ \phi_{3} := \calG_{[0,60]}(||[p_1^{T},p_{2}^{T},p_{3}^{T}]^{T}||_{\infty} \leq 40) $. The norms $ ||\cdot|| $ and $ ||\cdot||_{\infty} $ denote the euclidean and the maximum norm, respectively. The overall task is given as the conjunction $ \phi_0 := \phi_1 \wedge \phi_2 \wedge \phi_3 $. In contrast to~\cite{Lindemann2019}, the considered task also contains disjunctions.

For the construction of the CBF $ b_0(t,x) $, all rules from Section~\ref{subsec:construction of candiate CBFs} (R0-R7) are applied; the controller is designed according to Section~\ref{subsec:controller design}. 
The simulation is implemented in Julia using Jump~\cite{Dunning2017} and run on an Intel
Core i5-10310U with 16GB RAM. The controller is evaluated with 50Hz and the control input is applied using a zero-order hold; the computation of the control input took 16ms on average.
The trajectories of the agents resulting from the simulation are depicted in Figure~\ref{fig:agent_paths}, the evolution of $ b_0 $ and the applied inputs in Figure~\ref{fig:graphs}. Since $ b_0(t,\varphi(t)) \geq 0 \; \forall t\in[0,60] $, we conclude that the specified constraints are satisfied. Besides, the inputs are indeed continuous a.e.

\begin{figure}[t]
	\vspace{0.2cm}
	\centering
	\def\svgwidth{0.7\columnwidth}
	\import{Images/Simulation/}{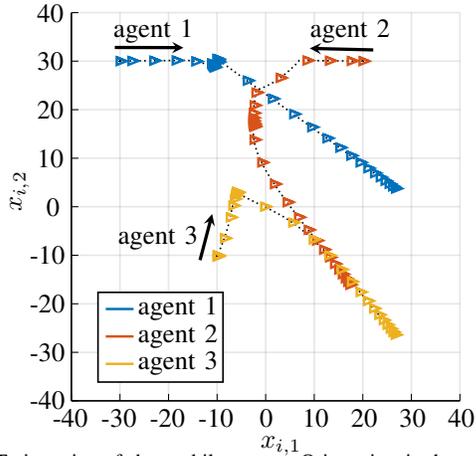}\\
	\vspace{-0.4cm}
	\caption{Trajectories of the mobile agents. Orientation is denoted by the triangles' orientation.}
	\label{fig:agent_paths}
	\vspace{-0.1cm}
\end{figure}

\begin{figure}[t]
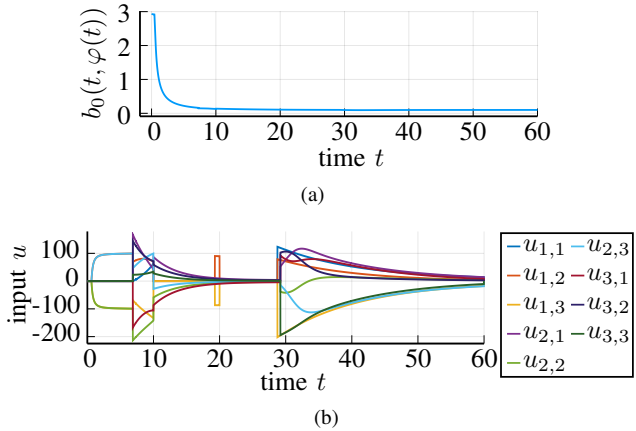

	\centering
	\subfigure[]{
			\vspace{-0.1cm}
		\def\svgwidth{0.7\columnwidth}
		\import{Images/Simulation/}{bf_value.pdf_tex}
		\label{fig:b0_small}
	}
	\subfigure[]{
		\def\svgwidth{1\columnwidth}
		\import{Images/Simulation/}{inputs_lat.pdf_tex}
		\label{fig:inputs}
	}
	\vspace{-0.5cm}
	\caption{Control barrier function $ b_0(t,\varphi(t)) $ and control inputs $ u $ over time.}
	\label{fig:graphs}
	\vspace{-0.7cm}
\end{figure}


\vspace{-0.1cm}
\section{Conclusion}
\label{sec:conclusion}
\vspace{-0.1cm}

In this paper, we constructed a nonsmooth time-varying CBF for Signal Temporal Logic tasks including disjunctions and derived a controller that ensures their satisfaction. By using a nonsmooth approach, we avoided the problem of vanishing gradients on the CBF that occurs when employing smoothed approximations of the minimum and maximum operators. Moreover, by partitioning the state space into sections and designing an optimization problem for each of them, we could determine the respective elementary barrier function ``of relevance''. This allowed us to avoid the usage of differential inclusions in our derivation, thereby to reduce the conservativeness of our results and to apply a non-smooth approach to time-varying CBFs. 


\section*{Appendix}

\begin{proof}[Proof of Lemma~\ref{lemma:b is BF}]
	It trivially follows from the domain of $ h_i $, $ \gamma_i $ and the inverse step function $ \sigma^{-1} $, that $ b_{0} $ is defined on~$ \calT \times \calX  $. When applying minimum and maximum operators to a set of continuous and (piecewise) continuously differentiable functions, the resulting function is continuous and piecewise continuously differentiable. Thus, continuity and piecewise continuous differentiability of $ b_{0} $ in~$ x $ follow. Since additionally all functions~$ \gamma_{i} $ are continuously differentiable, and $ b_{0} $ is only discontinuous in a finite number of deactivation times, $ b_{0} $~is continuously differentiable a.e. in $ t $. 
	
	Next, we consider the discontinuities. For eventually tasks $ \phi_{i} = \calF_{[a,b]} \psi_{i'} $ (R3) and always tasks $ \phi_{i} = \calG_{[a,b]} \psi_{i'} $ (R4), it holds that $ b_{i}(t,x) $ is continuous for $ t\leq\beta_{i} $ as $ b_{i'} $ is time-independent according to R0, R1 and R2, and $ \gamma_{i} $ is continuous. If $ \phi_{i} $ is part of a conjunction or disjunction, continuity properties for $ t>\beta_{i} $ follow from the BFs constructed in R5 and R6 which are considered next. Otherwise, it holds $ b_{0}(t,x) \equiv b_{i}(t,x) \equiv 0 $ for $ t > \beta_{i} $. 
	
	For conjunctions $ \phi_i = \bigwedge_{i'\in\calB_i} \phi_{i'} $ (R5), note that $ i'' = \text{argmin}_{i'\in\calB_i} b_{i'}(\beta_{i''},x) $ for a given $ x $ if $ b_{i} $ is discontinuous at time $ \beta_{i''} $. If $ \widetilde{\calB}_i(t) \neq \emptyset $ for $ t > \beta_{i''} $, then there exists a $ i'''\in\widetilde{\calB}_i(t) $ such that $ b_{i''}(\beta_{i''},x) \leq b_{i'''}(\beta_{i''},x) $, and hence \eqref{eq:continuity condition on b} holds. 
	Otherwise if $ \widetilde{\calB}_i(t) = \emptyset $, it holds $ \beta_{i}=\beta_{i''} $. Then, either $ \phi_{i} $ is part of a conjunction or disjunction and the continuity properties for $ t>\beta_{i} $ follow recursively, or $ b_{0}(t,x) \equiv b_{i}(t,x) \equiv 0 $ for $ t > \beta_{i} $. 
	
	For disjunctions $ \phi_i = \bigvee_{i'\in\calB_i} \phi_{i'} $ (R6), observe that $ b_{i} $ are continuous a.e. in time for $ t \leq \beta_{i} $ and any discontinuities at $ t<\beta_{i} $ are due to discontinuities of $ b_{i'} $, $ i'\in\calB_i $, resulting from conjunctions (R5). Therefore, \eqref{eq:continuity condition on b} holds at all discontinuities of $ b_i $ on $ t<\beta_{i} $. Then, either $ \phi_{i} $ is again a part of a conjunction or disjunction and the continuity properties for $ t>\beta_{i} $ follow recursively, or $ b_{0}(t,x) \equiv b_{i}(t,x) \equiv 0 $ for $ t > \beta_{i} $.
\end{proof}

\begin{proof}[Proof of Theorem~\ref{thm:task satisfaction}]
	From~\eqref{seq:STL grammar predicates}, it follows directly that $ b_{i}(t,x(t)) \geq 0 \Rightarrow (x,t) \vDash \psi_i $ for $ \psi_i = p_i $ (R0).
	For the conjunction $ \phi_i = \bigwedge_{i'\in\calB_i} \phi_{i'} $ (R5), if $ b_{i}(\tau,x(\tau)) \geq 0 \; \forall \tau\in\calT $ holds, then $ b_{i''}(\tau,x(\tau)) \geq 0 $ with $ i'' = \text{argmin}_{i'\in\calB_i} b_{i'}(\tau,x(\tau)) $ for all $ \tau\in\calT $. Hence, it holds that $ b_{i'}(\tau,x(\tau)) \geq 0 \; \forall \tau\in\calT $ for all $ i'\in\calB_i $ and it follows $ x \vDash \phi_i $. A similar reasoning also holds for $ b_i $ implementing $ \psi_i = \bigwedge_{i'\in\calB_i} \psi_{i'} $ (R1).
	For disjunctions, we have to distinguish disjunctions with temporal operators (R6) and without (R2). For $ \psi_i = \bigvee_{i'\in\calB_i} \psi_{i'} $ without temporal operators (R2), it follows from $ b_{i}(t,x(t)) \geq 0 $ that $ \exists i'\in\calB_i: b_{i'}(t,x(t)) \geq 0 $ and hence $ (x,t) \vDash \psi_i $. For $ \phi_i = \bigvee_{i'\in\calB_i} \phi_{i'} $ which includes the case with temporal operators (R6), note that $ x \vDash \phi_i \Leftrightarrow \exists i'' \in\calB_i: x \vDash \phi_{i''} $. If $ b_{i}(\tau,x(\tau)) \geq 0 \; \forall \tau\in\calT $, then $ \exists i''\in\calB_i : b_{i''}(\tau,x(\tau)) \geq 0 \;\forall \tau \leq \beta_{i} $ and it follows that $ \exists i'' \in\calB_i: x \vDash \phi_{i''} $. For the \emph{eventually} operator $ \phi_{i} = \calF_{[a,b]} \psi_{i'} $ (R3), $ \gamma_{i}(t) $ is chosen such that $ \gamma_{i}(t') \leq 0  $ for some $ t' \in [a,b] $. Hence, $ b_{i}(t',x(t')) \geq 0 $ implies $ b_{i'}(t',x(t')) \geq 0 $, and the result follows from~\eqref{seq:STL grammar eventually}. For the \emph{always} operator (R4), the result follows analogously on the interval $ [a,b] $. Hence, $ b_{\phi}(t,x(t)) \geq 0 \Rightarrow x \vDash \phi $. 
\end{proof}

\begin{proof}[Proof of Lemma~\ref{lemma:existence of feasible solution}]
	At first observe that since the predicate functions $ h_{k}(x) $ are concave and the time-dependent functions $ \gamma(t) $ introduced by the temporal operators in R3 and R4 do not impact concavity with respect to $ x $, $ {b}_{0}^{k} $ in~\eqref{seq:simplified optimization problem gradient condition} is concave in $ x $ and $ b_0 $ is piecewise concave in $ x $. Piecewise concavity in $ x $ implies that for a given time $ t $, a given state $ x\in\calX $ and any direction $ d\in\bbR^{n} $ with $ 0 < ||d|| $ sufficiently small, it holds $ b_0(t,x+\varepsilon d) \geq (1-\varepsilon)b_{0}(t,x) + \varepsilon b_{0}(t,x+d) $, $ \varepsilon\in[0,1] $. As $ b_0 $ is piecewise differentiable in $ x $ by construction, we obtain by first-order Taylor approximation $ \frac{\partial b_{0}}{\partial x}(t,x) \, d \geq b_{0}(t,x+d) - b_{0}(t,x) $. Therefore, if there is no local maximum on $ b_{0} $ in $ x\in\calC(t) $ for a given $ t $, then there exists at least one direction $ d $ such that $ b_{0}(t,x+d) > b_{0}(t,x) $. Because $ \calX = \bigcup_{k'\in\calI^{e}} \calS_{k'}(t) $, there exists at least one $ k\in\calI_{0}^{e,a}(t,x) $ \ such that $ x+d\in\calS_{k}(t) $. For these $ k $, it holds $ x\notin\calH_{k} $ where $ \calH_{k} $ is the set of maxima of $ h_k $, and furthermore $ L_{g}h_{k}(x) \neq 0 $ due to Assumption~\ref{ass:predicate function}. Since $ \frac{\partial {b}_{0}^{k}}{\partial x} = \frac{\partial h_k}{\partial x} $, \eqref{seq:simplified optimization problem gradient condition} and \eqref{seq:section constraint} are satisfied for sufficiently large $ u $ and \eqref{eq:optimization problem for one section} has a feasible solution. 
	
	Next, consider the case when $ x $ is a maximum point of $ b_{0} $ for a given $ t\in\calT $. As $ \frac{\partial {b}^{k}_{0}}{\partial t}(t,x) = \frac{\partial \gamma_{i}}{\partial t}(t) $ for an $ i\in\calQ_{0}^{k}(t) $, we obtain $ \frac{\partial {b}^{k}_{0}}{\partial t}(t,x) > -\alpha(b_{\text{min}}) \geq -\alpha(b_{0}(t,x)) $ with Assumption~\ref{ass:class K function}. Because $ \frac{\partial {b}^{k}_{0}}{\partial x}(t,x) (f(x)+g(x)u) = 0 $ also holds for sufficiently large inputs $ u $ according to Assumption~\ref{ass:predicate function}, we conclude that \eqref{eq:optimization problem for one section} has a solution for any $ k\in\calI_{0}^{e,a}(t,x) $ also in this case.
	Since all terms in \eqref{seq:simplified optimization problem gradient condition} and \eqref{seq:section constraint} are finite, $ u $ is also finite. 
\end{proof}

\begin{proof}[Proof of Lemma~\ref{lemma:input continuity}]
	Let $ \varphi: \calT \rightarrow \calX $ be a solution to $ \eqref{eq:input affine system} $ that passes through a state $ x\in\calC(t) $ at time $ t $, i.e., $ \varphi(t) = x $. At first consider some time $ t\in\calT $ for which $ b_{0}(\tau,x) $ is locally Lipschitz continuous on $ \tau\in [t,t+\varepsilon] $ with an $ \varepsilon > 0 $ (this only does not hold at $ t\in \lbrace\beta_{i}\rbrace_{i\in\calI} $), and let $ u^{\ast}(t) $ be the optimal solution to~\eqref{eq:optimization problem for one section} at time $ t $. Besides, let $ \calI^{e,a}_{0}(t,x) $ be the set of indices of active elementary BFs at $ (t,x) $ as defined in~\eqref{eq:active elementary BF index set}. Since all predicate functions $ h_{i} $, $ i\in\calI^{e} $, and time-varying functions $ \gamma_{i} $, $ i\in\calI $, are continuously differentiable, it follows for $ (t,x) $ that also any function $ b_{0}^{k}(t,x) $, $ k\in\calI^{e,a}_{0}(t,x) $, is continuously differentiable in a neighborhood of $ (t,x) $. This also implies that any function $ s_{kl}(t,x) $, $ k,l\in\calI^{e,a}_{0}(t,x) $, $ k\neq l $, is continuously differentiable and $ s'_{kl}(t,x) $ in~\eqref{seq:section constraint} is continuous in all of its arguments. Moreover, note that also \eqref{seq:simplified optimization problem objective} and the right-hand side of~\eqref{seq:simplified optimization problem gradient condition} are continuous in a neighborhood of $ (t,x) $. Consequently, all constraints of \eqref{eq:optimization problem for one section} and its objective function~\eqref{seq:simplified optimization problem objective} are continuous and hence the solution $ u^{\ast}_{k}(\tau) $ to~\eqref{eq:optimization problem for one section} is continuous on $ \tau\in[t,t+\varepsilon] $ assuming that $ \varphi(\tau)\in\calS_{k}(\tau) \; \forall \tau\in[t,t+\varepsilon] $. Since~\eqref{seq:section constraint} admits only $ d\in\bbR^{n} $ such that $ \frac{\partial s_{kl}}{\partial (t,x)} \, [1,d^{T}]^{T} \geq 0 $, it holds $ \varphi(\tau)\in\calS_{k}(\tau) \; \forall \tau\in[t,t+\varepsilon] $ for at least one $ k\in\calI^{e,a}_{0}(t,x) $. Thus we conclude that $ u^{\ast} $ is continuous a.e.
\end{proof}

\begin{proof}[Proof of Corollary~\ref{cor:continuity prop of varphi}]
	Since $ f $, $ g $ are continuous, and $ u $ is continuous a.e., they are also measurable according to Lusin's theorem \cite[p.~66]{Royden2010}. Thus the Carathéodory solution $ \varphi(t) = \int_{t_0}^{t} (f(x) + g(x) u(\tau)) d\tau $ exists for any $ t_0\in\calT $. It holds $ \dot{\varphi}(t) = f(x) + g(x) u(t) $ for almost all $ t\in\calT $ and $ \varphi $ is absolutely continuous. In particular, as $ u^{\ast} $ is continuous on $ \tau\in[t,t+\varepsilon] $ for $ t\notin\lbrace \beta_{i} \rbrace_{i\in\calI} $, $ \varphi $ is differentiable on $ [t,t+\varepsilon] $. Thus, the right-sided derivative exists at $ t $ and it holds $ d_{t^{+}} \varphi(t) = f(\varphi(t))+g(\varphi(t))u^{\ast}(t) $.
\end{proof}

\begin{proof}[Proof of Proposition~\ref{prop:equivalence optimization problems}]
	Let $ t\in\calT $ be a time such that $ \varphi(t) $ is differentiable and $ t\notin \lbrace\beta_{i}\rbrace_{i\in\calI} $. Then  according to~\cite[p.~155]{Filippov1988}, it holds that
	\begin{align}
		\label{eq:directional derivative of nonsmooth function}
		\begin{split}
			\frac{d}{dt} b_0(t,&\varphi(t)) = \lim_{\delta\rightarrow 0} \frac{b_0(t+\delta,\varphi(t+\delta))-b_{0}(t,\varphi(t))}{\delta} \\
			&= \lim_{\delta\rightarrow 0} \frac{b_{0}(t+\delta,\varphi(t)+\delta\dot{\varphi}(t)) - b_{0}(t,\varphi(t))}{\delta} \\
			&= \frac{d}{d\delta} b_0(t+\delta,\varphi(t)+\delta\dot{\varphi}(t))\bigg|_{\delta=0}
		\end{split}
	\end{align}
	As the right sided derivative of $ \varphi $ exists for all $ t\in\calT\smallsetminus\lbrace \beta_{i} \rbrace_{i\in\calI} $ according to Corollary~\ref{cor:continuity prop of varphi} and it holds $ d_{t^{+}} \varphi(t) = f(x)+g(x)u^{\ast} $, the right sided derivative of $ b_{0}(t,\varphi(t)) $ exists for $ t\in\calT\smallsetminus\lbrace \beta_{i} \rbrace_{i\in\calI} $ and it follows from \eqref{eq:directional derivative of nonsmooth function} that
	\begin{align*}
		d_{t^{+}}\, b_{0}&(t,\varphi(t)) =\!\! \lim_{\delta\rightarrow 0^{+}} \frac{b_{0}(t\!\!+\!\!\delta,\varphi(t)\!\!+\!\!\delta d_{t^{+}}\!\varphi(t))\!\!-\!\! b_{0}(t,\varphi(t))}{\delta} \\
		&= d_{\delta^{+}} \, b_{0}(t+\delta, \varphi(t)+\delta (f(\varphi(t))+g(\varphi(t))u))\bigg|_{\delta = 0}.
	\end{align*}
	On the other hand, for any $ u $ there exists at least one $ k\in\calI_{0}^{e,a}(t,x) $ such that $ s'_{kl}(t,x,u) \geq 0 \;\forall l\in\calI^{e,a}_{0}(t,x),\, l\neq k $, and it holds
	\begin{align*}
		d_{t^{+}}\, b_{0}(t,\varphi(t))\! &=\! \frac{\partial {b}^{k}_{0}}{\partial x}(\!t,\!\varphi(t)\!) (f(\!\varphi(t)\!)\!\!+\!\!g(\!\varphi(t)\!)u) \!\!+\!\! \frac{\partial {b}^{k}_{0}}{\partial t}(\!t,\!\varphi(t)\!).
	\end{align*}
	Because the objectives of the minimization in~\eqref{eq:optimization problem for one section} and~\eqref{eq:input} are the same as in \eqref{eq:general optimization problem}, we conclude that optimization problem~\eqref{eq:general optimization problem}
	is equivalent to \eqref{eq:optimization problem for one section}-\eqref{eq:input} for  all $ t\in\calT\smallsetminus\lbrace \beta_{i} \rbrace_{i\in\calI} $.  
\end{proof}

\begin{proof}[Proof of Theorem~\ref{theorem:forward invariance}]
	Due to Corollary~\ref{cor:continuity prop of varphi} and Proposition~\ref{prop:equivalence optimization problems}, it holds that
	\begin{align*}
		d_{t^{+}}\, b_{0}(t,\varphi(t))\! = \! d_{\delta^{+}} \, b_{0}(t\!+\!\delta, \varphi+&\delta \dot{\varphi}(t))\bigg|_{\delta = 0} \!\!\!\!\!\! \geq \! -\alpha(b_0(t,\varphi(t)))
	\end{align*}
	for $ t\in\calT\smallsetminus\lbrace \beta_{i} \rbrace_{i\in\calI} $ and thus $ \calC(t) $ is forward invariant on any connected time interval in $ \calT\smallsetminus\lbrace \beta_{i} \rbrace_{i\in\calI} $. Furthermore, as~\eqref{eq:continuity condition on b} holds for $ b_0 $ and therefore~\eqref{eq:continuity condition of C} as well, we conclude that $ \calC(t) $ is forward invariant on $ \calT $ and $ b_{0} $ is a CBF. 
\end{proof}

\vspace{-0.15cm}
\bibliographystyle{IEEEtrans}
\bibliography{/Users/wiltz/CloudStation/JabBib/Research/000_MyLibrary}
\balance

\end{document}